\newcommand{\EDF}{\textrm{EDF}}
\newcommand{\RM}{\textrm{RM}}
\newcommand{\DM}{\textrm{DM}}
\newcommand{\drop}[1]{}
\newtheorem{Theorem}{Theorem}
\newtheorem{Definition}{Definition}
\newtheorem{Lemma}[Theorem]{Lemma}
\newtheorem{Corollary}[Theorem]{Corollary}
\newcommand{\equals}{\stackrel{\mathrm{def}}{=}}
\newcommand{\lcm}{\mathrm{lcm}}
\newcommand{\maxm}{\mathrm{max}}
\begin{document}

\title{Exact Feasibility Tests for Real-Time Scheduling of
Periodic Tasks upon Multiprocessor Platforms\footnote{This paper is
an extended version of ``Feasibility Intervals for Fixed-Priority Real-Time
Scheduling on Uniform Multiprocessors'', Proceedings of 11th IEEE
International Conference on Emerging Technologies and Factory
Automation (ETFA06) and of ``Feasibility Intervals for
Multiprocessor Fixed-Priority Scheduling of Arbitrary Deadline Periodic
Systems'', Proceedings of 10th Design, Automation and Test in Europe
(DATE07).}}

\date{}

\author{
\begin{tabular}[t]{c@{\extracolsep{7em}}c}
  Liliana Cucu\thanks{Supported in part by FNRS Grant.}  & Jo\a"el Goossens \\
  LORIA-INPL & 
Universit\'e Libre de Bruxelles (\textsc{u.l.b.})\\
615 rue du Jardin Botanique &50 Avenue Franklin D. Roosevelt\\
54600 Villers-les-Nancy, France &1050 Brussels, Belgium  \\
{\em liliana.cucu@loria.fr} & {\em joel.goossens@ulb.ac.be}
\end{tabular}
}

\maketitle

\begin{abstract}
In this paper we study the global scheduling of periodic task
systems upon multiprocessor platforms. We first show two very
general properties which are well-known for uniprocessor platforms
and which remain for multiprocessor platforms: \textit{(i)} under
few and not so restrictive assumptions, we show that feasible
schedules of periodic task systems are periodic from some point with
a period equal to the least common multiple of task periods and
\textit{(ii)} for the specific case of synchronous periodic task
systems, we show that feasible schedules repeat from the origin. We
then present our main result: we characterize, for task-level fixed-priority schedulers and for asynchronous constrained or arbitrary deadline periodic task models, \emph{upper bounds} of the first time instant where the schedule repeats. We show that job-level fixed-priority schedulers are predictable upon unrelated multiprocessor platforms. For task-level fixed-priority schedulers, based on the upper bounds and the predictability property, we provide for asynchronous constrained or arbitrary deadline periodic task sets, \emph{exact} feasibility tests. Finally, for the \emph{job}-level fixed-priority \EDF{} scheduler, for which such an upper bound remains unknown, we provide an \emph{exact} feasibility test as well.
\end{abstract}

\section{Introduction} \label{intro} The use of computers to control
safety-critical real-time functions has increased rapidly over the
past few years. As a consequence, real-time systems --- computer
systems where the correctness of each computation depends on both the
logical results of the computation and the time at which these results
are produced --- have become the focus of much study. Since the
concept of ``time'' is of such importance in real-time application
systems, and since these systems typically involve the sharing of one
or more resources among various contending processes, the concept of
scheduling is integral to real-time system design and
analysis. Scheduling theory as it pertains to a finite set of requests
for resources is a well-researched topic. However, requests in
real-time environment are often of a recurring nature. Such systems
are typically modelled as finite collections of simple, highly
repetitive tasks, each of which generates jobs in a very predictable
manner. In this work, we consider \emph{periodic task systems}, each periodic 
task $\tau_{i}$ generates jobs at each integer multiple of its period $T_i$ with the restriction that the first job is released at time $O_i$ (the task
offset).


The \emph{scheduling algorithm} determines which job[s] should be
executed at each time instant. When there is at least one schedule
satisfying all constraints of the system, the system is said to be
\emph{feasible}.

\emph{Uniprocessor} real-time systems are well studied since the
seminal paper of Liu and Layland~\cite{Liu} which introduces a model
of periodic systems. The literature considering scheduling algorithms
and feasibility tests for uniprocessor scheduling is tremendous. In
contrast for \emph{multiprocessor} parallel machines the problem of
meeting timing constraints is a relatively new research area.


In the design of scheduling algorithms for multiprocessor environments,
one can distinguish between at least two distinct approaches. In
\emph{partitioned scheduling}, all jobs generated by a task are
required to execute on the \emph{same} processor. \emph{Global scheduling},
by contrast, permits \emph{task migration} (i.e., different jobs of an
individual task may execute upon different processors) as well as \emph{job
migration} (an individual job that is preempted may resume execution
upon a processor different from the one upon which it had been
executing prior to preemption).

From theoretical and practical point of view we can distinguish between at
least three kinds of multiprocessor machines (from less
general to more general):

\begin{description}
\item[Identical parallel machines] Platforms upon which
all the processors are identical, in the sense that they have the same
computing power.

\item[Uniform parallel machines] By contrast, each processor in a
uniform parallel machine is characterized by its own computing
capacity, a job that executes on processor $\pi_i$ of computing capacity
$s_i$ for $t$ time units completes $s_i \times t$ units of execution.

\item[Unrelated parallel machines] In unrelated parallel
machines, there is an execution rate $s_{i,j}$ associated with each
job-processor pair, a job $J_i$ that executes on processor
$\pi_j$ for $t$ time units completes $s_{i,j} \times t$ units of execution. This kind of heterogeneous architectures models dedicated processors (e.g., if $s_{i,j}=0$ means that $\pi_j$ cannot serve job $J_{i}$).
\end{description}

\paragraph{Related research.} The problem of scheduling periodic task
systems on multiprocessors was originally studied
in~\cite{liu2}. Recent studies provide a better understanding of that
scheduling problem and provide first solutions. E.g., \cite{carpenter}
presents a categorization of real-time multiprocessor scheduling
problems. It is important to notice that, to the best of our
knowledge, the literature does not provide \emph{exact} feasibility
tests for global scheduling of periodic systems upon
multiprocessors. Moreover, we know that uniprocessor feasibility
results do not remain for multiprocessor scheduling. For instance the
synchronous case (i.e., considering that all tasks start their
execution synchronously) is not the worst case anymore upon
multiprocessors. Another example is the fact that the first busy
period (see~\cite{Lehoczky90} for details) does not provide a
feasibility interval upon multiprocessors (see~\cite{goossens4} for
such counter-examples). Initial results indicate that real-time
multiprocessor scheduling problems are typically not solved by
applying straightforward extensions of techniques used for solving
similar uniprocessor problems. Unfortunately, too often, researchers
use uniprocessor arguments to study multiprocessor scheduling problems
which leads to incorrect properties. This fact motivated our rigorous
and formal approach; we will present and prove correct, rigorously,
in this paper, our exact feasibility tests (and related properties).

\paragraph{This research.} In this paper we consider
preemptive global scheduling and we present exact
feasibility tests upon multiprocessors for various scheduling
policies and various periodic task models.

Our feasibility tests are based on \emph{periodicity}
properties of the schedules and on \emph{predictability}
properties of the considered schedulers. The latter
properties are not obvious because of multiprocessor
scheduling anomalies (see~\cite{Ha} for details).

More precisely, in the first part of this paper we
prove that, under few and no so restrictive assumptions,
\emph{any} feasible schedule of periodic tasks repeat from some point
in time. Then we prove that job-level fixed-priority
schedulers (e.g., \EDF{}  and \RM) are predictable upon unrelated multiprocessor platforms.

We also characterize for task-level fixed-priority schedulers and for the various periodic task models an upper bound of the first time instant where the schedule
repeats (and its period).

Lastly, we combine the periodicity and predictability
properties to provide for these various kind of periodic
task sets and various schedulers \emph{exact} feasibility tests.

\paragraph{Organization.} This paper is organized as follows. Section~\ref{model} introduces the definitions, the model of computation and our assumptions. We prove the periodicity of feasible schedules of periodic systems in Section~\ref{sectionMainPer}. In Section~\ref{sectionExactFebTest} we prove that job-level fixed-priority schedulers (e.g., \EDF{}  and \RM) are predictable upon unrelated multiprocessor platforms and we combine the periodicity and predictability
properties to provide for these various kind of periodic  task sets and various schedulers \emph{exact} feasibility tests. Lastly, we conclude in Section~\ref{conclusion}.

\section{Definitions and assumptions}\label{model}

We consider the scheduling of periodic task systems. A
system $\tau$ is composed by $n$ periodic tasks $\tau_1,
\tau_2, \ldots, \tau_n$, each task is characterized by a
period $T_i$, a relative deadline $D_i$, an execution requirement 
$C_i$ and an offset $O_i$. Such a periodic task generates an
infinite sequence of jobs, with the $k^{\text{th}}$ job
arriving at time-instant $O_{i}+ (k - 1)T_i$ ($k = 1, 2,
\ldots$), having an execution requirement of $C_{i}$ units,
and a deadline at time-instant $O_{i}+ (k-1)T_{i}+
D_{i}.$ It is important to notice that we assume in the
first part of this manuscript that each task instance of
the same task (say $\tau_{i}$) has the very same execution
requirement ($C_{i}$); we will relax this assumption in
the second part of this manuscript by showing that our
analysis is \emph{predictable}.

We will distinguish between {\it implicit deadline} systems where
$D_i=T_i, \forall i$; {\it constrained deadline} systems where $D_i
\leq T_i, \forall i$ and {\it arbitrary deadline} systems where there
is no relation between the deadlines and the periods.
Notice that arbitrary deadline systems includes
constrained deadline ones which includes the implicit deadline ones.

In some cases, we will consider the more general problem of
scheduling set of jobs, each job $J_j=(r_j,e_j,d_j)$ is characterized
by a release time $r_j$, an execution requirement $e_i$ and an absolute
 deadline $d_j$. The job $J_j$ must execute for $e_j$ time units over
the interval $[r_j,d_j)$. A job becomes {\em active} from its release
time to its completion.

A periodic system is said to be {\it synchronous} if there is an instant where
all tasks make a new request simultaneously, i.e., $\exists t, k_1,
k_2, \ldots k_n$ such that $\forall i: t=O_i+k_iT_i$ (see
\cite{goossens5} for details). Without loss of generality, we consider
$O_i= 0, \forall i$ for synchronous systems. Otherwise
the system is said to be {\it asynchronous}.

We denote by $\tau^{(i)} \equals \{\tau_1, \ldots, \tau_i\}$,
by $O_{\max} \equals \max \{O_1, O_2, \ldots, O_n \}$, by $P_i \equals
\lcm \{T_1,\ldots, T_i\}$ and $P \equals P_n$.

We consider in this paper multiprocessor platforms $\pi$
composed of $m$ unrelated processors (or one of its
particular cases: uniform and identical platforms):
$\{\pi_1,
\pi_2, \ldots, \pi_m \}$. Execution rates $s_{i,j}$ are
associated to each task-processor pair, a task $\tau_i$ that
executes on processor $\pi_j$ for $t$ time units completes
$s_{i,j} \times t$ units of execution. For each task
$\tau_i$ we assume the associated set of processors
$\pi_{n_{i,1}} > \pi_{n_{i,2}} > \cdots > \pi_{n_{i,m}} $
ordered in the decreasing order of the execution rates
relatively to the task: $s_{i,n_{i,1}} \geq s_{i,n_{i,2}}
\geq \cdots \geq s_{i,n_{i,m}}$. For identical execution
rates, the ties are broken arbitrarily, but consistently,
such that the set of processors associated to each task is
\emph{total} ordered. Consequently, the \emph{fastest} processor
relatively to task $\tau_{i}$ is $\pi_{n_{i,1}}$, i.e., the
first processor of the ordered set associated to the task.
Moreover, for a task $\tau_i$ in the following we consider
that a processor $\pi_{a}$ is {\em faster} than $\pi_{b}$ (relatively to its associated set of processors) if $\pi_{a} >\pi_{b}$ even if we have $s_{i,a} =
s_{i,b}$. For the processor-task pair $(\pi_j, \tau_i)$ if
$s_{i,j} \neq 0$ then $\pi_j$ is said to be an \emph{eligible} processor
for $\tau_i$. Notice that these concepts and definitions
can be trivially adapted to the scheduling of jobs upon
unrelated platforms. 

We consider in this paper a discrete model, i.e., the characteristics
of the tasks and the time are integers. 


We define now the notions of the state of the system and the schedule.

\begin{Definition} [State of the system $\theta(t)$] \label{defState}
  For any arbitrary deadline system $\tau = \{ \tau_1, \ldots,
  \tau_n \}$ we define the {\em state} $\theta(t)$ of the system $\tau
  $ at instant $t$ as $\theta : \mathbb{N} \rightarrow (\mathbb{Z} \times\mathbb{N}^2)^n$ with $ \theta(t) \equals (\theta_1(t), \theta_2(t), \ldots, \theta_n(t))$ where
  
$$ \theta_i(t) \equals 
\begin{cases}
  (-1,t_1,0), & \text{
\begin{minipage}[t]{11.4cm}
  if no job of task $\tau_i$ was activated before or at
$t$. In that case it remains $t_1$ time units until the first activation of $\tau_i$. (We
  have $0< t_1 \leq O_i$.);
\end{minipage}
}\\

(n_1, t_2, t_3), & \text{
\begin{minipage}[t]{11.4cm}
otherwise. In that case 
$t_2$ is the time elapsed at
    instant $t$ since the last action of the oldest active job of
    $\tau_i$. If there are $n_1 \neq 0$ active jobs of $\tau_i$ then
    $t_3$ units were already executed for the oldest active job. If $n_{1} = 0$, 
    there is no active job of $\tau_i$ at $t$, $t_{3}$ is undefined in that case. (We
    have $0 \leq n_1 \leq \lceil \frac{D_i}{T_i} \rceil $, $0 \leq t_2
    < T_i\cdot \lceil \frac{D_i}{T_i} \rceil$ and $0 \leq t_3 < C_i$.)
\end{minipage}
}
\end{cases}$$
\end{Definition}
Notice that at any instant $t$ several jobs of the same task might be
active and we consider that the oldest job is scheduled first,
i.e., the FIFO rule is used to serve the various jobs of
given task. 

\drop{
If we consider the case of constrained deadline task systems, then Definition~ref{defState} is modified as follows:
\begin{Definition} [State of the system $\theta(t)$] \label{defStatebis}
  For any constrained deadline system $\tau = \{ \tau_1, \ldots,
 \tau_n \}$ we define the {\em state} $\theta(t)$ of the system $\tau
  $ at instant $t$ as $\theta : \mathbb{N} \rightarrow (\{-1, 0,1 \}
  \times \mathbb{N}^2)^n$ with $ \theta(t) \equals (\theta_1(t),
  \theta_2(t), \ldots, \theta_n(t))$ where
  
$$ \theta_i(t) \equals 
\begin{cases}
  (-1,t_1,0), & \text{
\begin{minipage}[t]{12cm}
  if no job of task $\tau_i$ was activated before or at $t$and it remains $t_1$
time units until the first activation
    of $\tau_i$. (We have $0< t_1 \leq O_i$);
\end{minipage}}\\
  
  (0, t_2, 0), & \text{
\begin{minipage}[t]{12cm} if at least one job of $\tau_i$ was already activated
before $t$, but there is no active job of $\tau_i$ at $t$. The time elapsed
since its last activation is $t_2$. (We have $0 < t_2 < T_i$);
\end{minipage}}\\
    
    (1, t_3, t_4), & \text{
    \begin{minipage}[t]{12cm} if there is an active job of $\tau_i$
     at instant $t$ the time elapsed since
    its last action is $t_3$ and $t_4$ units were already executed. (We have $0
\leq t_3 < T_i$  and $0 \leq t_4 < C_i$.)
\end{minipage}}
\end{cases}$$
\end{Definition}
}

\begin{Definition} [Schedule $\sigma(t)$] \label{defSched}
  For any task system $\tau = \{ \tau_1, \ldots, \tau_n \}$ and any
   set of $m$ processors $\{\pi_1, \ldots, \pi_m \}$ we define the
  {\em schedule} $\sigma(t)$ of system $\tau $ at instant $t$ as
  $\sigma : \mathbb{N} \rightarrow \{0, 1, \ldots, n \}^m$  where $ \sigma(t) \equals ( \sigma_1(t),
  \sigma_2(t), \ldots,
  \sigma_m(t) )$ with \\
  $\sigma_j(t) \equals \left\{
\begin{array}{ll}
0, & \text{if there is no task scheduled on } \pi_j \\ &\text{at instant } t; \\
i, & \text{if task } \tau_i \mbox{ is scheduled on } \pi_j
\text{ at instant } t.  
\end{array}
\right. \forall 1 \leq j \leq m. 
$
\end{Definition}

Notice that Definition~\ref{defSched} can be extended trivially to the
scheduling of jobs.

A system $\tau$ is said to be {\it feasible} upon a multiprocessor
platform if there exists at least one schedule in which all tasks meet
their deadlines. If $A$ is an algorithm which schedules $\tau$ upon a
multiprocessor platform to meet its deadlines, then the system $\tau$
is said to be $A$-feasible.

In this work, we consider that \emph{task parallelism is forbidden}: a task
cannot be scheduled at the same instant on different processors,
i.e. $\nexists j_1 \neq j_2 \in \{1, 2, \ldots, m \}$ and $t \in
\mathbb{N}$ such that $\sigma_{j_1}(t)= \sigma_{j_2}(t) \neq 0$.

The scheduling algorithms considered in this paper are {\em
  deterministic} and work-conserving with the following definitions

\begin{Definition}[Deterministic algorithm]\label{detAlg} 
A scheduling algorithm is said to
be \emph{deterministic} if it generates a unique schedule for any
given sets of jobs .
\end{Definition}

In uniprocessor (or identical multiprocessor) scheduling, a {\em
work-conserving} algorithm is defined to be the one that never idles a
processor while there is at least one active task. For unrelated multiprocessors
we adopt the following definition:

\begin{Definition} [Work-conserving algorithm] \label{defWorkC} 
  An unrelated multiprocessor scheduling algorithm is said
\emph{work-conserving} if at each instant, the algorithm schedules jobs
  to processors as follows: the highest priority (active) job $J_i$ is
  scheduled on its fastest (and eligible) processor $\pi_j$. The very
  same rule is then applied to the remaining active jobs on the
  remaining available processors.
\end{Definition}

Moreover, we will assume that the decision of the scheduling
algorithm at time $t$ is not based on the past, nor on the
actual time $t$ but only on the characteristics of active tasks and on
the state of the system at time $t$. More formally, we consider
\emph{memoryless} schedulers.

\begin{Definition}[Memoryless algorithm]\label{def:memoryless} 
  A scheduling algorithm is said to be \emph{memoryless} if the
  scheduling decision made by it at time $t$ depends only on the
  characteristics of active tasks and on the current state of the
  system, i.e., on $\theta(t)$.
\end{Definition}

Consequently, for memoryless and deterministic schedulers we have the
following property:
\[
\forall t_{1}, t_{2} \;\text{such that}\; \theta(t_{1})=\theta(t_{2})
\;\text{then}\; \sigma(t_{1})=\sigma(t_{2}).
\]

It follows by Definition~\ref{defWorkC} that a processor $\pi_j$ can
be idled and a job $J_i$ can be active at the same time if and
only if $s_{i,j}=0$.

In the following, we will distinguish between two kinds of scheduler:

\begin{Definition}[Task-level fixed-priority]\label{task-level}
The priorities are assigned to the tasks
beforehand, at run-time each job \emph{inherits} of its task
priority and remains constant.
\end{Definition} 

\begin{Definition}[Job-level fixed-priority]\label{prioDr}
  A scheduling algorithm is a {\em job-level fixed-priority}
  algorithm if and only if it satisfies the condition that for every
  pair of jobs $J_i$ and $J_j$, if $J_i$ has higher priority than
  $J_j$ at some time instant, then $J_i$ always has higher priority
  than $J_j$.
\end{Definition} 

Popular task-level fixed-priority schedulers include the Rate Monotonic (\RM) or the Deadline Monotonic (\DM); popular job-level fixed-priority schedulers include the Earliest Deadline First (\EDF), see~\cite{Liu} for details.

We denote by $\delta_i^k$ the $k^{\text{th}}$  job of task $\tau_i$ which
becomes active at time instant $R_i^k \equals O_i+(k-1)T_i$.

\begin{Definition}[$\epsilon_{i}^{k}(t)$]
  For any task $\tau_i$, we define $\epsilon_{i}^{k}(t)$ to be the amount of
  time already executed for $\delta_i^k$ in the interval $[R_i^k,
  t)$.
\end{Definition}

We introduce now the availability of the processors for any schedule
$\sigma(t)$.

\begin{Definition} [Availability of the processors $a(t)$, task scheduling]
\label{defAvai} For any task system $\tau = \{
  \tau_1, \ldots, \tau_n \}$ and any set of $m$ processors
  $\{\pi_1, \ldots, \pi_m \}$ we define the {\em availability of the processors}
  $a(t)$ of system $\tau$ at instant $t$ as the set of available processors
   $a(t)  \equals \{\/j \mid \mbox{ } \sigma_j(t)=0 \} \subseteq \{1,
  \ldots, m\}$.
\end{Definition}

\section{Periodicity of feasible schedules}\label{sectionMainPer}

It is important to remind that we assume in this section that all task execution requirements are \emph{constant}, we will relax this assumption in Section~\ref{sectionExactFebTest}. 

This section contains four parts,  we give in each part of this section results concerning the periodicity of feasible schedules. By periodicity (assuming that the period is $\gamma$) of a schedule $\sigma$, we understand there is a time instant $t_{0}$ such that $\sigma(t)=\sigma(t+\gamma), \forall t \geq t_{0}$.

The first part of this section provides periodicity results for a
(very) general scheduling algorithm class: deterministic,
memoryless and work-conserving schedulers. 

The second part of this section provides periodicity results for
synchronous periodic task systems.

The third and the fourth part of this section present periodicity
results for task-level fixed-priority scheduling algorithms for constrained and arbitrary deadline systems, respectively. 

\subsection{Periodicity of deterministic, memoryless and
  work-conserving scheduling algorithms} \label{genGenSyst}

We show that feasible schedules of periodic task systems obtained
using deterministic, memoryless and work-conserving algorithms are
periodic from some point. Moreover we prove that the schedule repeats
with a period equal to $P$ for a sub-class of such schedulers. Based on that property, we provide two
interesting corollaries for preemptive task-level fixed-priority algorithms (Corollary~\ref{task-levelperiod}) and for preemptive deterministic $\EDF$\footnote{by deterministic {$\EDF$} we mean that ambiguous situations are solved deterministically.} (Corollary~\ref{edfAll1}).

We present first two preliminary results in order to prove Theorem~\ref{perGen}.
 
\begin{Lemma}\label{thZero} For any deterministic and memoryless
  algorithm $A$, if an asynchronous arbitrary deadline system $\tau$
  is $A$-feasible, then the $A$-feasible schedule of $\tau$ on $m$
  unrelated processors is periodic with a period \emph{divisible} by
  $P$. 
\end{Lemma}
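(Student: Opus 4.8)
The plan is to exploit the fact that, for an $A$-feasible schedule, the state $\theta(t)$ ranges over a \emph{finite} set, and then to combine a pigeonhole argument with the determinism and memorylessness of $A$.

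First I would observe that $A$-feasibility forces every component of $\theta(t)$ to stay within the bounded ranges listed in Definition~\ref{defState}: since no deadline is ever missed, at each instant at most $\lceil D_i/T_i\rceil$ jobs of $\tau_i$ are simultaneously active, and the remaining fields ("time elapsed since the last action" and "amount already executed") are bounded as stated. Hence the set $S\subseteq(\mathbb{Z}\times\mathbb{N}^2)^n$ of reachable states is finite. This is exactly where feasibility enters: without it, jobs of a task could pile up and the first component of $\theta_i(t)$ would grow without bound.

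Next I would sample the state at the instants $O_{\max}, O_{\max}+P, O_{\max}+2P,\dots$. Since $S$ is finite, there are integers $0\le a<b$ with $\theta(O_{\max}+aP)=\theta(O_{\max}+bP)$; set $t^{\star}\equals O_{\max}+aP$ and $\gamma\equals(b-a)P$, so $P\mid\gamma$. I then claim $\theta(t^{\star}+\delta)=\theta(t^{\star}+\gamma+\delta)$ and $\sigma(t^{\star}+\delta)=\sigma(t^{\star}+\gamma+\delta)$ for every $\delta\ge 0$, proved by induction on $\delta$. The base case $\delta=0$ is immediate: $\theta(t^{\star})=\theta(t^{\star}+\gamma)$ by construction, hence $\sigma(t^{\star})=\sigma(t^{\star}+\gamma)$ by the property of memoryless deterministic schedulers stated after Definition~\ref{def:memoryless}. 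For the inductive step, the key remark is that $\theta(\cdot+1)$ is a function of the current state $\theta(\cdot)$, the current decision $\sigma(\cdot)$, and the set of tasks releasing a job at the next instant; and for any $t\ge O_{\max}$ a job of $\tau_i$ is released at $t$ iff $t\equiv O_i\pmod{T_i}$, so — because $T_i\mid P\mid\gamma$ and $t^{\star}+\delta+1\ge O_{\max}$ — exactly the same tasks release jobs at $t^{\star}+\delta+1$ and at $t^{\star}+\gamma+\delta+1$. Together with the induction hypothesis this gives $\theta(t^{\star}+\delta+1)=\theta(t^{\star}+\gamma+\delta+1)$, and then memorylessness and determinism give $\sigma(t^{\star}+\delta+1)=\sigma(t^{\star}+\gamma+\delta+1)$.

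From the claim, $\sigma(t)=\sigma(t+\gamma)$ for all $t\ge t^{\star}$, i.e. the schedule is periodic with period $\gamma$, and $P\mid\gamma$, which is the statement. I expect the only delicate point to be the justification that $\theta(t+1)$ depends on the history only through $\theta(t)$, $\sigma(t)$ and the arrivals at $t+1$; this needs a careful but routine inspection of Definition~\ref{defState} (checking that each field updates deterministically once the decision is fixed), combined with the observation that past $O_{\max}$ the arrival pattern is intrinsically $P$-periodic, so shifting by the multiple $\gamma$ of $P$ preserves it. Once this is established the remainder is just pigeonhole and the induction above.
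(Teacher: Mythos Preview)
Your proof is correct and rests on the same pigeonhole/finite-state idea as the paper, but the two arguments are organised differently. The paper applies pigeonhole to \emph{all} instants $t\ge O_{\max}$, obtains $t_1<t_2$ with $\theta(t_1)=\theta(t_2)$, notes that the state already encodes the time to the next activation so that equal states force identical futures, and then proves \emph{a posteriori} by a short contradiction that $t_2-t_1$ must be a multiple of $P$ (otherwise the ``time elapsed since last activation'' component of some $\theta_i$ would differ at $t_1$ and $t_2$). You instead sample only at the instants $O_{\max}+kP$, which makes divisibility by $P$ automatic, but then you must argue explicitly --- via your induction on $\delta$ and the $P$-periodicity of the arrival pattern past $O_{\max}$ --- that equal states at two such sampled instants propagate to equal states and schedules at all later times. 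Both routes are valid; yours trades the paper's divisibility argument for a spelled-out state-evolution induction that the paper leaves implicit.
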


\begin{proof}
  First notice that from $t_0 \geq O_{\max}$ all tasks are released,
  and the configuration $\theta_i(t)$ of each task is a triple of
  finite integers $(\alpha, \beta, \gamma)$ with $ \alpha \in \{0,1, \ldots, \lceil \frac{D_i}{T_i} \rceil \}$, $0 \leq \beta < \max_{1 \leq i
    \leq n}T_i$ and $0 \leq \gamma < \max_{1 \leq i \leq
    n}C_i$. Therefore there is a finite number of different system
  states, hence we can find two distinct instants $t_1$ and $t_2$
  ($t_{2} > t_1 \geq t_0)$ with the same state of the system
  ($\theta(t_1)=\theta(t_2)$). The schedule repeats from that instant
  with a period dividing $t_2-t_1$, since the scheduler is
  deterministic and memoryless.
  
  Notice that since the tasks are periodic, the arrival pattern of jobs repeats with a period equal to $P$ from $O_{\maxm}$.

  We prove now by contradiction that $t_2-t_1$ is necessarily a multiple
  of $P$. We suppose that $\exists k_1 < k_2 \in \mathbb{N}$ such
  that $t_i= O_{\maxm}+k_iP+ \Delta_i, \forall i \in \{1,2 \}$ with
  $\Delta_1 \neq \Delta_2$, $\Delta_1, \Delta_2 \in [0,P)$ and $\theta(t_1)=\theta(t_2)$. This implies that there are tasks for which the time elapsed since the last activation at $t_1$ and the time elapsed since the last
  activation at $t_2$ are not equal. But this is contradiction with the fact that $\theta(t_1)=\theta(t_2)$. Consequently $\Delta_1$ must be equal to
  $\Delta_2$ and, thus, we have $t_2-t_1 = (k_2 - k_1)P$. 
\end{proof}

For a sub-class of schedulers, we will show that the period of the schedule is $P$, but first a definition (inspired from~\cite{GD99b}):

\begin{Definition}[Request-dependent scheduler]
A scheduler is said to be \emph{request-dependent} if $\forall i,j,k,\ell, t: \delta_{i}^{k+h_{i}}(t+P) > \delta_{j}^{\ell + h_{j}}(t+P)$ if and  only if $\delta_{i}^{k}(t) > \delta_{j}^{\ell }(t)$, where $\delta_{i}^{k}(t) > \delta_{j}^{\ell }(t)$ means that the request $\delta_{i}^{k}(t)$ has a higher priority than the request $\delta_{j}^{\ell }(t)$.
\end{Definition}

The next lemma extend results given for arbitrary
deadline task systems in the \emph{uniprocessor} case
(see~\cite{thesisJG}, p.~$55$ for details).

\begin{Lemma} \label{prepTh} For any preemptive, job-level fixed-priority and request-dependent algorithm $A$ and any asynchronous arbitrary deadline system $\tau$ on $m$ unrelated processors, we have that:
  for each task $\tau_i$, for any time instant $t \geq O_i$ and $k$
  such that $R_i^k \leq t \leq R_i^k+D_i$, if there is no deadline
  missed up to time $t+P$, then $\epsilon_{i}^{k}(t) \geq
  \epsilon_{i}^{k+h_i}(t+P)$ with $h_i \equals \frac{P}{T_i}$.
\end{Lemma}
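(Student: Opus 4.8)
The plan is to prove the statement by induction on $k$, the job index of task $\tau_i$, comparing the schedule on $[O_{\max}, t]$ (more precisely, on the window of $\tau_i$'s $k$-th job) with the schedule on $[O_{\max}+P, t+P]$. The key structural fact we exploit is that, because the arrival pattern repeats with period $P$ (each $\delta_j^\ell$ at time $t$ corresponds to $\delta_j^{\ell+h_j}$ at time $t+P$, with $h_j = P/T_j$ an integer since $P = \lcm\{T_1,\dots,T_n\}$) and because $A$ is \emph{request-dependent}, the relative priority order among the jobs alive at time $s$ is exactly the same as the relative priority order among the corresponding shifted jobs alive at $s+P$. So the only way the two schedules can differ is if some job at $s$ has already received \emph{strictly more} service than its counterpart at $s+P$ — i.e., if the induction hypothesis has already failed for some earlier job. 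The work-conserving rule (Definition~\ref{defWorkC}) then forces the two schedules to agree as long as no such discrepancy has occurred.

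Concretely, I would argue as follows. Fix $\tau_i$ and the job $\delta_i^k$ with $R_i^k \le t \le R_i^k + D_i$; let $t' = t + P$ and note $R_i^{k+h_i} = R_i^k + P$. Suppose, for contradiction, that $\epsilon_i^k(t) < \epsilon_i^{k+h_i}(t')$. Consider the least instant $u \in [R_i^k, t)$ at which $\delta_i^k$ is \emph{not} scheduled but its counterpart $\delta_i^{k+h_i}$ \emph{is} scheduled at $u+P$ (such a $u$ exists, otherwise $\delta_i^{k+h_i}$ could never overtake $\delta_i^k$ in accumulated execution). At that instant, by work-conservation, $\delta_i^k$ is not scheduled at $u$ only if all its eligible processors are occupied by jobs of priority higher than $\delta_i^k$ (or $\delta_i^k$ has already completed at $u$ — but then $\epsilon_i^k$ is maximal and cannot be overtaken). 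By request-dependence, the set of jobs with priority higher than $\delta_i^{k+h_i}$ that are active at $u+P$ is exactly the "$+h_j$"-shift of the set of jobs with priority higher than $\delta_i^k$ active at $u$. Since $\delta_i^{k+h_i}$ \emph{is} scheduled at $u+P$, at least one of those high-priority jobs — say $\delta_j^{\ell}$ at $u$, corresponding to $\delta_j^{\ell+h_j}$ at $u+P$ — must either be absent at $u+P$ or not occupying the processor it occupied at $u$; in either case that higher-priority job has received strictly less cumulative execution at $u+P$ than at $u$, i.e. $\epsilon_j^{\ell}(u) < \epsilon_j^{\ell+h_j}(u+P)$. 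But $\delta_j^\ell$ has strictly higher priority than $\delta_i^k$ and is released no later than $\delta_i^k$'s job window, so (by a careful choice of the induction ordering — induct on priority order, or equivalently on the job release times / a suitable well-founded order) the induction hypothesis applies to it and gives $\epsilon_j^\ell(u) \ge \epsilon_j^{\ell+h_j}(u+P)$, a contradiction.

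The main obstacle I anticipate is setting up the induction/minimality argument so that it is genuinely well-founded and so that the "carefully chosen earlier job" to which we apply the hypothesis is indeed \emph{earlier} in whatever order we are inducting on — on a multiprocessor the displaced high-priority job need not be one we have "already handled" in a naive left-to-right time induction, so the right move is to induct over jobs in order of decreasing priority (which is a fixed total order by the job-level fixed-priority assumption) and to track all higher-priority jobs simultaneously rather than one task at a time. A secondary technical point is handling the hypotheses on the window ($t \ge O_i$, $R_i^k \le t \le R_i^k + D_i$) and the clause "no deadline missed up to $t+P$": the latter is what guarantees that none of the counterpart jobs $\delta_j^{\ell+h_j}$ has been dropped or truncated before time $t+P$, so that "received strictly less execution" is the only alternative to "scheduled identically." One also has to check the base case — the highest-priority active job is always scheduled on its fastest eligible processor both at $u$ and $u+P$, and by request-dependence these are the same processor, so $\epsilon$ evolves identically, establishing the inequality (in fact equality) for it. Once these bookkeeping issues are pinned down, the core contradiction is exactly the work-conserving + request-dependent clash sketched above.
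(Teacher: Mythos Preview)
Your proposal is essentially correct and follows the same core idea as the paper: if $\epsilon_i^k(t) < \epsilon_i^{k+h_i}(t+P)$, then at some earlier instant $t'$ the job $\delta_i^k$ is blocked while its shifted counterpart $\delta_i^{k+h_i}$ runs at $t'+P$; by work-conservation and request-dependence, one of the higher-priority blockers $\delta_\ell^{k_\ell}$ at $t'$ must have its counterpart $\delta_\ell^{k_\ell+h_\ell}$ already completed at $t'+P$, yielding $\epsilon_\ell^{k_\ell}(t') < C_\ell = \epsilon_\ell^{k_\ell+h_\ell}(t'+P)$ --- an earlier violation.

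The one notable difference is the well-founded order you agonize over. The paper sidesteps the whole issue by choosing the minimal counterexample \emph{in time, globally over all tasks and jobs simultaneously}: it assumes $t$ is the \emph{first} instant at which $\epsilon_j^k(t) < \epsilon_j^{k+h_j}(t+P)$ fails for \emph{some} $(j,k)$. Then the displaced higher-priority job $\delta_\ell^{k_\ell}$ gives a violation at $t' < t$, and that is already the contradiction --- no induction on priority is needed at all. Your worry that ``the displaced high-priority job need not be one we have already handled in a naive left-to-right time induction'' evaporates once you realize the minimality is taken jointly over all $(j,k,t)$ rather than per task. This is both shorter and avoids the bookkeeping you anticipate; I would recommend switching to that formulation. (A minor slip: your sentence ``that higher-priority job has received strictly less cumulative execution at $u+P$ than at $u$'' says the opposite of the inequality $\epsilon_j^\ell(u) < \epsilon_j^{\ell+h_j}(u+P)$ you correctly write next to it.)
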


\begin{proof}
  The proof is made by contradiction. Notice first that the function
  $\epsilon_i^k(\,)$ is a non-decreasing discrete step function with $0
  \leq \epsilon_i^k(t) \leq C_i, \forall t$ and $\epsilon_i^k(R_i^k)=0=
  \epsilon_i^{k+h_i}(R_i^{k+h_i}), \forall k$.

  We assume that a \emph{first} time instant $t$ exists such that there are
  $j$ and $k$ with $R_j^k \leq t \leq R_j^k+D_j$ and
  $\epsilon_{j}^{k}(t) < \epsilon_{j}^{k+h_j}(t+P)$. This assumption
  implies that there is a time instant $t'$ with $R_j^{k} \leq t' < t$
  such that $\delta_j^{k+h_j}$ is scheduled at $t'+P$ while
  $\delta_j^k$ is not scheduled at $t'$. We obtain that $m$ higher priority jobs are scheduled at $t'$ and among these jobs there is, at least, one
  job $\delta_{{\ell}}^{k_{\ell}+h_{\ell}}$ of a task $\tau_{{\ell}}$ with
  ${\ell} \in \{1, 2, \ldots, n \}$ that is not
  scheduled at $t'+P$, while $\delta_{{\ell}}^{k_{\ell}}$
  is scheduled at $t'$ ($h_{\ell}\equals\frac{P}{T_{\ell}}$). This implies
  that $\epsilon_{{\ell}}^{k_{\ell}}(t') <
  \epsilon_{{\ell}}^{k_{\ell}+h_{\ell}}(t'+P)=C_{{\ell}}$
  but this is a contradiction with the fact that $t$ is the first such
  time instant.
\end{proof}

\begin{Theorem}\label{perGen} 
For any preemptive job-level fixed-priority and request-dependent algorithm $A$ and any $A$-feasible asynchronous arbitrary deadline system $\tau$ 
upon $m$ unrelated processors the schedule is periodic with a period equal to $P$. 
\end{Theorem}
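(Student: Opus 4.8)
The plan is to upgrade the one‑sided inequality of Lemma~\ref{prepTh} into an equality, $\epsilon_{i}^{k}(t)=\epsilon_{i}^{k+h_{i}}(t+P)$, and then to read off $\sigma(t)=\sigma(t+P)$ from these equalities together with work‑conservation and the request‑dependent priority order. \textbf{Step 1 (reduce to an a~priori period).} Since $\tau$ is $A$‑feasible no deadline is ever missed, so Lemma~\ref{prepTh} applies at every release and, by Lemma~\ref{thZero}, there are an instant $t^{\ast}\ge O_{\max}$ and an integer $\ell\ge 1$ with $\sigma(t)=\sigma(t+\ell P)$ for all $t\ge t^{\ast}$. Write $h_{i}=P/T_{i}$.

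\textbf{Step 2 (the squeeze).} Fix $\tau_{i}$, an index $k$ with $R_{i}^{k}\ge t^{\ast}+\max_{1\le j\le n}D_{j}$, and $t$ with $R_{i}^{k}\le t\le R_{i}^{k}+D_{i}$. For each $p\ge 0$, applying Lemma~\ref{prepTh} to the job $\delta_{i}^{k+p h_{i}}$ at instant $t+pP$ is legitimate: the deadline hypothesis always holds, and the release constraint $R_{i}^{k+p h_{i}}\le t+pP\le R_{i}^{k+p h_{i}}+D_{i}$ reduces once more to $R_{i}^{k}\le t\le R_{i}^{k}+D_{i}$. Iterating for $p=0,\dots,\ell-1$ yields
\[
\epsilon_{i}^{k}(t)\;\ge\;\epsilon_{i}^{k+h_{i}}(t+P)\;\ge\;\cdots\;\ge\;\epsilon_{i}^{k+\ell h_{i}}(t+\ell P).
\]
On the other hand, by the choice of $k$ the whole life span of $\delta_{i}^{k}$ (from $R_{i}^{k}$ to its completion, which by feasibility is at most $R_{i}^{k}+D_{i}$) lies in $[t^{\ast},\infty)$, and no job released before $t^{\ast}$ is active at any instant $\ge R_{i}^{k}$; since $\sigma$ is $\ell P$‑periodic from $t^{\ast}$ and the arrival pattern is $P$‑periodic (hence $\ell P$‑periodic), the job $\delta_{i}^{k+\ell h_{i}}$, released at $R_{i}^{k}+\ell P$, undergoes exactly the $\ell P$‑shift of what $\delta_{i}^{k}$ undergoes from $R_{i}^{k}$ on (including the FIFO order among the jobs of $\tau_{i}$), so $\epsilon_{i}^{k+\ell h_{i}}(t+\ell P)=\epsilon_{i}^{k}(t)$. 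The chain collapses, giving
\[
\epsilon_{i}^{k}(t)=\epsilon_{i}^{k+h_{i}}(t+P)\qquad\text{for all such }i,k,t.
\]

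\textbf{Step 3 (from job progress to the schedule).} Put $t_{0}=t^{\ast}+2\max_{1\le j\le n}D_{j}$. For $t\ge t_{0}$, any job active at $t$ (and likewise at $t+P$) was released at some $r$ with $r>t-\max_{j}D_{j}\ge t^{\ast}+\max_{j}D_{j}$ (feasibility forces completion by $r+D_{j}$), so Step~2 applies to it. Hence $\delta_{i}^{k}\mapsto\delta_{i}^{k+h_{i}}$ is a task‑index–preserving bijection between the jobs active at $t$ and those active at $t+P$ which preserves the amount already executed, hence the active/completed status and the remaining requirement; by the request‑dependent property it also preserves the relative priority order, and the rates $s_{i,j}$ are fixed. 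Since $A$ is work‑conserving (Definition~\ref{defWorkC}) and job‑level fixed‑priority (Definition~\ref{prioDr}), $\sigma(t)$ is a deterministic function of exactly these data, so $\sigma(t)=\sigma(t+P)$; equivalently one checks $\theta(t)=\theta(t+P)$ (Definition~\ref{defState}) and concludes with the memoryless/deterministic property (Definition~\ref{def:memoryless}). Thus $\sigma$ is periodic with period $P$ from $t_{0}$.

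\textbf{Main obstacle.} The crux is Step~2: Lemma~\ref{prepTh} alone only gives ``$\ge$'', and promoting it to an equality requires both the a~priori eventual periodicity of Lemma~\ref{thZero} and the observation that schedule‑periodicity transfers verbatim to the per‑job functions $\epsilon_{i}^{\bullet}$ --- which, under arbitrary deadlines where several jobs of a task may be simultaneously active, needs care about how those jobs are indexed and served in FIFO order. Step~3 is then bookkeeping, the only pitfall being to start late enough (after $t^{\ast}+2\max_{j}D_{j}$) that no pre‑periodic job can break the job‑to‑job correspondence.
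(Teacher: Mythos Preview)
Your proof is correct and follows essentially the same route as the paper: both arguments combine Lemma~\ref{thZero} (eventual periodicity with period a multiple of $P$) with the monotone chain from Lemma~\ref{prepTh}, then squeeze the chain to force equality and conclude $P$-periodicity. The paper carries out the squeeze at a single pair of instants $t_{1},t_{2}$ on the state vector $\theta$ (writing $\theta(t_{1})\le\theta(t_{1}+P)\le\cdots\le\theta(t_{2})=\theta(t_{1})$), while you run it on the functions $\epsilon_{i}^{k}$ over a range of $t$ and then rebuild $\sigma(t)=\sigma(t+P)$ explicitly in Step~3; the extra care you take with the starting point $t_{0}=t^{\ast}+2\max_{j}D_{j}$ and the FIFO bookkeeping is sound but not strictly needed, since Lemma~\ref{thZero} already delivers $\theta(t^{\ast})=\theta(t^{\ast}+\ell P)$ directly, which makes the endpoint equality in your chain immediate without the buffer.
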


\begin{proof}
  By Lemma~\ref{thZero} we have that $\exists t_i= O_{\maxm}+k_iP+d,
  \forall i \in \{1,2 \}$ with $0 \leq d < P$ such that
  $\theta(t_1)=\theta(t_2)$. We know also that the arrivals of jobs of
  tasks repeat with a period equal to $P$ from $O_{\maxm}$. Therefore
  for all time instants $t_1+kP$, $\forall k < k_2-k_1$ (i.e. $t_1+kP<
  t_2$), we have that the time elapsed since the last activation at
  $t_1+kP$ is the same for all tasks. Moreover
  since $\theta(t_1)=\theta(t_2)$ we have that
  $\epsilon_{i}^{{\ell}_i}(t_1)=\epsilon_{i}^{{\ell}_i+\frac{(k_2-k_1)P}{T_i}}(t_2)$
  with ${\ell}_i =\lceil \frac{O_{\maxm}+d}{T_i} \rceil +
  \frac{k_1P}{T_i}$,$\forall i$. But by Lemma~\ref{prepTh} we also
  have that $\epsilon_{i}^{{\ell}_i}(t_1) \leq
  \epsilon_{i}^{{\ell}_i+\frac{P}{T_i}}(t_1+P) \leq \cdots \leq
  \epsilon_{i}^{{\ell}_i+\frac{(k_2-k_1)P}{T_i}}(t_2), \forall
  i$. Consequently we obtain that $\theta(t_1) \leq \theta(t_1+P) \leq
  \cdots \leq \theta(t_2)$ and $\theta(t_1)=\theta(t_2)$ which implies
  that $\theta(t_1) = \theta(t_1+P) = \cdots = \theta(t_2)$. 
\end{proof}

\begin{Corollary}\label{task-levelperiod}
For any preemptive task-level fixed-priority algorithm $A$, if an asynchronous arbitrary deadline system $\tau$ is $A$-feasible upon $m$ unrelated processors is periodic with a period equal to $P$.
\end{Corollary}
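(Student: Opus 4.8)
The plan is to observe that a preemptive task-level fixed-priority algorithm is a particular preemptive \emph{job}-level fixed-priority algorithm which is moreover \emph{request-dependent}, so that Theorem~\ref{perGen} applies without modification. Concretely, I would deduce Corollary~\ref{task-levelperiod} from Theorem~\ref{perGen} by checking the two hypotheses of that theorem for an arbitrary preemptive task-level fixed-priority scheduler $A$.

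First I would check Definition~\ref{prioDr}. If $\delta_i^k$ (a job of $\tau_i$) has higher priority than $\delta_j^\ell$ (a job of $\tau_j$) at some instant, then either $i \neq j$ and $\tau_i$ has higher task-priority than $\tau_j$ --- a relation fixed once and for all by Definition~\ref{task-level} --- or $i = j$ and $k < \ell$, i.e. the FIFO rule used to order the jobs of a given task. In both cases the outcome of the comparison is independent of the time instant, hence $\delta_i^k$ always has higher priority than $\delta_j^\ell$; so $A$ is job-level fixed-priority in the sense of Definition~\ref{prioDr}.

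Next I would verify the request-dependent property. The key point is that the relation $\delta_i^{k}(t) > \delta_j^{\ell}(t)$ does not depend on $t$ at all, and depends on $k,\ell$ only through the test $k < \ell$ in the case $i=j$. Replacing $k$ by $k+h_i$ and $\ell$ by $\ell+h_j$, with $h_i \equals \frac{P}{T_i}$ and $h_j \equals \frac{P}{T_j}$, leaves the task-priority comparison untouched when $i\neq j$, and when $i=j$ it turns the test $k<\ell$ into $k+h_i < \ell+h_i$, which is equivalent. Hence $\delta_i^{k+h_i}(t+P) > \delta_j^{\ell+h_j}(t+P)$ if and only if $\delta_i^{k}(t) > \delta_j^{\ell}(t)$, i.e. $A$ is request-dependent.

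Having established both facts, the conclusion is immediate: the system $\tau$ is asynchronous arbitrary deadline and $A$-feasible, and $A$ is preemptive, job-level fixed-priority and request-dependent, so Theorem~\ref{perGen} gives that the $A$-feasible schedule is periodic with period exactly $P$. The only point requiring a little care --- and the place where a sloppy argument could go wrong --- is the handling of several simultaneously active jobs of the same task: one must confirm that the FIFO tie-break is genuinely a fixed, time-independent total order on the jobs of each task, and that this order is preserved by the index shift $k \mapsto k+h_i$ (which it is, since that map is strictly increasing). Everything else is a routine unwinding of the definitions.
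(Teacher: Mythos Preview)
Your proposal is correct and follows exactly the paper's approach: the paper's proof is a one-line invocation of Theorem~\ref{perGen} after noting that task-level fixed-priority algorithms are job-level fixed-priority and request-dependent. You simply spell out those two verifications in detail, which the paper leaves implicit.
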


\begin{proof}
The result is a direct consequence of Theorem~\ref{perGen}, since task-level fixed-priority algorithms are job-level fixed-priority and request-dependent schedulers.
\end{proof}

\begin{Corollary}\label{edfAll1} A feasible schedule obtained using
  deterministic request-dependent global {$\EDF$} on $m$ unrelated processors of an
  asynchronous arbitrary deadline system $\tau$ is periodic with a
  period equal to $P$. 
\end{Corollary}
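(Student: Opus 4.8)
The plan is to obtain this corollary as an immediate instance of Theorem~\ref{perGen}, in exactly the same spirit as Corollary~\ref{task-levelperiod}: it suffices to check that deterministic request-dependent global \EDF{} belongs to the class of schedulers covered by that theorem, i.e., that it is preemptive, job-level fixed-priority and request-dependent, and that the ($\EDF$-)feasibility hypothesis supplies everything else the theorem needs (in particular, a deterministic scheduler and the absence of missed deadlines used through Lemma~\ref{prepTh}).

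First I would recall that global \EDF{} is preemptive by definition and that, together with the "deterministic" qualifier, it is a deterministic, memoryless, work-conserving scheduler, so the whole setup of Section~\ref{genGenSyst} applies. Next I would verify that \EDF{} is job-level fixed-priority in the sense of Definition~\ref{prioDr}: the priority of the job $\delta_i^k$ is governed by its absolute deadline $d_i^k \equals O_i+(k-1)T_i+D_i$, a value fixed at the release time of the job; with ties among equal deadlines resolved by a fixed, consistent rule, the relative priority of any two jobs never changes with time, which is precisely the required condition.

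The only point that genuinely uses the extra "request-dependent" adjective is the verification of the request-dependent property for \EDF. Here I would compute the deadline of $\delta_i^{k+h_i}$ with $h_i \equals \frac{P}{T_i}$, namely $O_i+(k+h_i-1)T_i+D_i = d_i^k + h_iT_i = d_i^k + P$; hence shifting the job index of \emph{every} task $\tau_i$ by its own $h_i$ shifts every absolute deadline by the common amount $P$, so the deadline ordering — and therefore the \EDF{} priority ordering — is invariant under this simultaneous shift, provided the tie-breaking rule among coincident deadlines is itself invariant under such shifts. This last invariance is exactly what the hypothesis "request-dependent \EDF" postulates; with it, \EDF{} satisfies the definition of a request-dependent scheduler.

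Finally, all hypotheses of Theorem~\ref{perGen} being matched, I would invoke it directly: since $\tau$ is feasible on the $m$ unrelated processors under this \EDF{} variant, and this variant is preemptive, job-level fixed-priority and request-dependent, the resulting schedule is periodic with period equal to $P$. I do not expect a real obstacle; the only subtlety — and the reason the statement is phrased for \emph{request-dependent} \EDF{} rather than for plain \EDF{} — is ensuring that the deterministic resolution of simultaneous deadlines is consistent under index shifts by $h_i$, which is built into the hypothesis rather than something to be proved.
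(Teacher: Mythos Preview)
Your proposal is correct and follows exactly the paper's approach: the paper's proof is a single sentence invoking Theorem~\ref{perGen} and noting that \EDF{} is a job-level fixed-priority scheduler. Your version is simply more thorough in explicitly verifying all the hypotheses of Theorem~\ref{perGen} (preemptive, job-level fixed-priority, request-dependent), whereas the paper leaves the request-dependent and preemptive checks implicit.
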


\begin{proof}
  The result is a direct consequence of Theorem~\ref{perGen}, since
  {$\EDF$} is a job-level fixed-priority scheduler.
\end{proof}

\subsection{The particular case of synchronous periodic systems}
\label{SectSynchEns}

In this section we deal with the periodicity of feasible schedules of
\emph{synchronous} periodic systems. Using the results obtained for
deterministic, memoryless and work-conserving algorithms we prove in
Section \ref{synSect} that the feasible schedules of synchronous
constrained deadline periodic systems are periodic from time instant
equal to $0$. In Section \ref{synSectarb} we study arbitrary deadline
periodic systems and the periodicity of feasible schedules of these
systems using preemptive task-level fixed-priority scheduling
algorithms.

\subsubsection{Synchronous constrained deadline periodic
  systems} \label{synSect} 

In this section we deal with the particular case of synchronous
periodic task systems and we show the periodicity of feasible
schedules.

\begin{Theorem}\label{synPer}
  For any deterministic, memoryless and work-conserving  algorithm $A$,
  if a synchronous constrained deadline system $\tau$ is $A$-feasible,
  then the $A$-feasible schedule of $\tau$ on $m$ unrelated processors
  is periodic with a period $P$ that begins at instant $0$.
\end{Theorem}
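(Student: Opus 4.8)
The plan is to bypass the ``eventual periodicity from some unknown $t_0$'' of Lemma~\ref{thZero} and instead pin the period directly to the origin: I would show that $\theta(0)=\theta(P)$ and then propagate this equality forward in time. First I would invoke synchrony: $O_i=0$ for all $i$, so $O_{\max}=0$, every task releases its first job at time $0$, and — since $T_i\mid P$ — the job-arrival pattern is already $P$-periodic starting at $0$.

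The core step is to establish $\theta(0)=\theta(P)$. At $t=0$ the only active jobs are the $n$ freshly released first jobs, each with nothing executed and nothing elapsed since its activation, so $\theta_i(0)=(1,0,0)$ for every $i$ in the notation of Definition~\ref{defState}. At $t=P$, task $\tau_i$ again releases a job; and any job of $\tau_i$ released strictly before $P$ is released no later than $P-T_i$, hence has deadline at most $P-T_i+D_i\le P$, using the constrained-deadline hypothesis $D_i\le T_i$. Since $\tau$ is $A$-feasible, such a job has completed by time $P$ and is therefore not active at $t=P$, so the only active job of $\tau_i$ at $P$ is the one just released, giving $\theta_i(P)=(1,0,0)=\theta_i(0)$ and thus $\theta(0)=\theta(P)$.

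Then I would run an induction on $t\ge 0$ to conclude $\theta(t)=\theta(t+P)$ for all $t$. The base case is the equality above. For the inductive step, assume $\theta(t)=\theta(t+P)$: determinism and memorylessness (the property displayed just after Definition~\ref{def:memoryless}) give $\sigma(t)=\sigma(t+P)$; and since $\theta(t+1)$ is a function solely of $\theta(t)$, of the decision $\sigma(t)$, and of the set of jobs released at $t+1$ — which, because $T_i\mid P$, is exactly the $P$-translate of the set released at $t+1+P$ — we obtain $\theta(t+1)=\theta(t+1+P)$. Feeding this back through memorylessness yields $\sigma(t)=\sigma(t+P)$ for all $t\ge 0$, i.e. the $A$-feasible schedule is periodic with period $P$ from instant $0$, which is the claim; in particular this refines Lemma~\ref{thZero} in the synchronous constrained case (the period is exactly $P$ and it begins at $0$).

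The hard part is the middle step, $\theta(0)=\theta(P)$: it is precisely here that both $D_i\le T_i$ and $A$-feasibility are indispensable, to rule out any ``carry-over'' active job of $\tau_i$ at time $P$. This is also exactly the point where the argument fails for arbitrary deadlines, which is why the later arbitrary-deadline periodicity results require the sharper priority-based reasoning with the $\epsilon_i^k$-comparison (Lemma~\ref{prepTh}) rather than this elementary state-coincidence argument. A minor technical care point is to verify that the one-step transition $\theta(t)\mapsto\theta(t+1)$ genuinely depends only on $(\theta(t),\sigma(t),\text{arrivals at }t+1)$ — which is what ``memoryless'' buys us — and that at a release instant a newly released job enters the state with zero elapsed time and zero executed units.
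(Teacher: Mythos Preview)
Your proposal is correct and follows essentially the same approach as the paper's own proof: establish $\theta(0)=\theta(P)$ using synchrony, the constrained-deadline hypothesis, and $A$-feasibility, then invoke determinism and memorylessness to conclude the schedule repeats. You spell out the forward induction on $t$ and the explicit state values $\theta_i(0)=(1,0,0)$ in more detail than the paper (which compresses this into ``the schedule repeats with a period equal to $P$''), but the argument is the same.
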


\begin{proof} 
  Since $\tau$ is a synchronous periodic system, all tasks become
  active at instants $0$ and $P$. Moreover, since $\tau$ is a
  $A$-feasible constrained deadline system, all jobs occurred strictly
  before instant $P$ have finished their execution before or at
  instant $P$. Consequently, at instants $0$ and $P$ the system is in
  the same state, i.e. $\theta(0)=\theta(P)$, and a deterministic and
  memoryless scheduling algorithm will make the same scheduling
  decision. The schedule repeats with a period equal to $P$.
\end{proof}

An interesting particular case of Theorem~\ref{synPer} is the following:

\begin{Corollary}\label{edfAll2} 
  A feasible schedule obtained using deterministic global {$\EDF$} of a
  synchronous constrained deadline system $\tau$ on $m$ identical or
  unrelated processors is periodic with a period $P$ that begins at
  instant $0$. 
\end{Corollary}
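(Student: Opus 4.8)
The plan is to observe that this corollary is an immediate instance of Theorem~\ref{synPer}: it suffices to check that (deterministic) global \EDF{} belongs to the class of \emph{deterministic}, \emph{memoryless}, and \emph{work-conserving} schedulers, after which Theorem~\ref{synPer} applies verbatim (note that the identical case is subsumed, since identical platforms are a special case of unrelated ones). Determinism holds by the standing assumption that the \EDF{} variant under consideration resolves ties among jobs with equal absolute deadline by a fixed rule (e.g.\ smallest task index); work-conservation holds because global \EDF{}, by construction, assigns at each instant the active job of earliest absolute deadline to its fastest eligible processor and then repeats the rule on the remaining active jobs and available processors, which is precisely Definition~\ref{defWorkC}.

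The one point that deserves a short argument is that \EDF{} is \emph{memoryless} in the sense of Definition~\ref{def:memoryless}, i.e.\ that its decision at time $t$ is a function of $\theta(t)$ alone (together with the fixed task parameters), not of $t$ or of the past. For a constrained deadline system the component $\theta_i(t)=(n_1,t_2,t_3)$ records, for task $\tau_i$, the number of currently active jobs, the time $t_2$ elapsed since the release of the oldest active job, and the work $t_3$ already done on it. From $t_2$ and the fixed values $T_i,D_i$ one recovers the absolute deadline of the oldest active job of $\tau_i$ relative to $t$ (it lies $D_i-t_2$ time units ahead), and the deadlines of the further active jobs of $\tau_i$ are spaced $T_i$ apart. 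Hence the total ordering of all active jobs by absolute deadline, with the deterministic tie-break applied, is determined by $\theta(t)$; consequently so is the whole \EDF{} processor assignment. Thus \EDF{} is memoryless.

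With the three hypotheses in hand, Theorem~\ref{synPer} gives the conclusion directly: since $\tau$ is synchronous, all tasks are (re)released at $0$ and at $P$; since the schedule under consideration is \EDF{}-feasible and $\tau$ has constrained deadlines, every job released strictly before $P$ completes by $P$; therefore $\theta(0)=\theta(P)$, and a deterministic memoryless scheduler reproduces from $P$ onward exactly the decisions it made from $0$, so the schedule is periodic with period $P$ beginning at instant $0$. I do not expect a real obstacle here — the only care needed is the bookkeeping that the state encoding of Definition~\ref{defState} genuinely determines the active jobs' absolute deadlines (equivalently, that \EDF{} is memoryless), which is exactly what that encoding was designed to capture.
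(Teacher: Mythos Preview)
Your proposal is correct and follows exactly the paper's own approach: the paper presents this corollary as ``an interesting particular case of Theorem~\ref{synPer}'' without further proof, and you simply verify (with more detail than the paper bothers to give) that deterministic global \EDF{} satisfies the hypotheses of that theorem. Your extra care in checking that the state $\theta(t)$ determines the relative deadlines of all active jobs---hence that \EDF{} is memoryless---is a useful elaboration but not a departure from the intended argument.
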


\subsubsection{Synchronous arbitrary deadline periodic
  systems} \label{synSectarb}

In this section we deal with the particular case of synchronous
arbitrary deadlines task systems and we show the periodicity of
feasible schedules obtained using preemptive task-level fixed-priority
scheduling algorithms.

In the following, and without loss of generality, we consider the tasks
ordered in decreasing order of their priorities $\tau_1 > \tau_2 >
\cdots > \tau_n$.

\begin{Lemma}\label{synPerLem} For any preemptive task-level fixed-priority
  algorithm $A$ and for any synchronous arbitrary deadline system $\tau$
  on $m$ unrelated processors, if no deadline is missed in the time interval
  $[0,P)$ and if $\theta(0)=\theta(P)$, then the schedule of $\tau$ is periodic with a period $P$ that begins at instant $0$.
\end{Lemma}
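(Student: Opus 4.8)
The plan is to reduce everything to the single claim that $\theta(t)=\theta(t+P)$ for every $t\geq 0$. Once that is established, the memoryless and deterministic character of any task-level fixed-priority scheduler immediately gives $\sigma(t)=\sigma(t+P)$ for all $t\geq 0$, which is exactly the asserted periodicity, with period $P$, starting at instant $0$. So the real work is to propagate the hypothesis $\theta(0)=\theta(P)$ forward in time.

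I would first isolate two ingredients. (i) Since $\tau$ is synchronous, the job-arrival pattern is itself $P$-periodic: a job of $\tau_i$ is released at $t+1$ if and only if one is released at $t+1+P$. (ii) The one-step evolution of the state is a deterministic function $F$: $\theta(t+1)$ is determined by the triple $\bigl(\theta(t),\,\sigma(t),\,\text{the set of jobs released at }t{+}1\bigr)$ — in the discrete model $\sigma(t)$ fixes, processor by processor and via the rates $s_{i,j}$, exactly how much execution each active job receives during $[t,t+1)$, and the FIFO rule resolves several active jobs of the same task — while $\sigma(t)$ is in turn a function of $\theta(t)$ alone because the scheduler is memoryless and deterministic.

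Then I would prove, by induction on $k\geq 0$, the statement: \emph{for every $t\in[0,P]$ we have $\theta(kP+t)=\theta(t)$, and no deadline is missed during $[kP,(k+1)P)$}. The case $k=0$ is the hypothesis. For the inductive step, taking $t=P$ in the statement for $k$ gives $\theta((k+1)P)=\theta(P)=\theta(0)$; starting from this common state and feeding in the same $P$-periodic stream of arrivals, ingredients (i)--(ii) let me reproduce, one unit step at a time, the trajectory $\theta(0),\theta(1),\dots,\theta(P)$ on the interval $[(k+1)P,(k+2)P]$, so that $\theta((k+1)P+t)=\theta(t)$ for all $t\in[0,P]$; and since no deadline is missed on $[0,P)$, the same step-by-step matching shows none is missed on $[(k+1)P,(k+2)P)$.

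The only delicate point — and the reason the hypothesis ``no deadline is missed in $[0,P)$'' is present — is the well-definedness of $\theta$: its components obey the bounds $0\leq n_1\leq\lceil D_i/T_i\rceil$, $0\leq t_2<T_i\lceil D_i/T_i\rceil$, $0\leq t_3<C_i$ only as long as no job has missed its deadline, since a missed deadline could leave behind a backlog not recorded by $\theta$ and thereby invalidate the memoryless transition $F$. One has to check that the no-miss hypothesis on $[0,P)$, together with the fact that $\theta(P)=\theta(0)$ forces a clean state at $P$ (the very state occurring at instant $0$, with nothing pending beyond one freshly released job per task), keeps every state $\theta(kP+t)$ inside its defined range along the induction; everything else is the routine unrolling of a deterministic, memoryless dynamics driven by a periodic input. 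As in the proof of Theorem~\ref{synPer}, work-conservation of the scheduler is not actually used here — only its determinism and memorylessness.
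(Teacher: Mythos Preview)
Your proposal is correct and follows essentially the same approach as the paper: the hypothesis $\theta(0)=\theta(P)$ together with the deterministic, memoryless character of the scheduler (and the $P$-periodic arrival pattern) forces the schedule to repeat from the origin. The paper's own proof is a two-line version of exactly this argument, omitting the step-by-step induction and the discussion of well-definedness that you spell out; your added care is sound but not something the paper deems necessary.
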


\begin{proof}
  Since at time instants $0$ and $P$ the system is in the same state,
  i.e. $\theta(0) =\theta(P)$, then at time instants $0$ and $P$ a
  preemptive task-level fixed-priority algorithm will make the same scheduling
  decision and the scheduled repeats from $0$ with a period equal to
  $P$.
\end{proof}

\begin{Theorem} \label{synPerNotFeb}
For any preemptive task-level fixed-priority algorithm $A$ and any synchronous
arbitrary deadline system $\tau$ on $m$ unrelated processors, if all deadlines are met in $[0,P)$ and $\theta(0) \neq \theta(P)$, then $\tau$ is not $A$-feasible.
\end{Theorem}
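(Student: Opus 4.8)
The plan is to prove the contrapositive of a slightly stronger statement: if $\tau$ is $A$-feasible (so that no deadline is ever missed), then $\theta(0)=\theta(P)$; together with the hypothesis that all deadlines are met in $[0,P)$ this makes the situation ``$\theta(0)\ne\theta(P)$'' impossible, which is exactly the claim. First I would record two easy facts. Since $\tau$ is synchronous, $\theta(0)=\bigl((1,0,0),\dots,(1,0,0)\bigr)$. And $\theta(P)=\theta(0)$ holds if and only if every job $\delta_i^k$ with $R_i^k<P$, i.e. with $1\le k\le h_i$ where $h_i\equals P/T_i$, has completed by time $P$: in that case the only job of $\tau_i$ active at $P$ is the freshly released $\delta_i^{h_i+1}$, with nothing executed and zero elapsed time, and conversely $\theta_i(P)=(1,0,0)$ forces every earlier job of $\tau_i$ to be finished, since an unfinished job released before $P$ would be active at $P$ with strictly positive elapsed time or would bring the active count above one. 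Hence it suffices to show that $A$-feasibility forces every job released in $[0,P)$ to be complete by $P$.

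Next I would set up the machinery. A task-level fixed-priority scheduler is job-level fixed-priority and request-dependent, so Lemma~\ref{prepTh} applies, and under $A$-feasibility its hypothesis holds for every $t$: thus $\epsilon_i^k(t)\ge\epsilon_i^{k+h_i}(t+P)$ for all $i,k$ and all $t$ with $R_i^k\le t\le R_i^k+D_i$. From this I would show that the pending work at period boundaries, $B(\ell)\equals\sum_{i=1}^n\sum_{k=1}^{\ell h_i}\bigl(C_i-\epsilon_i^k(\ell P)\bigr)$, is non-decreasing in $\ell$: a job still carrying pending work at $\ell P$ cannot have passed its deadline (that would be a miss), so $\ell P$ lies in its window and Lemma~\ref{prepTh} applies at $t=\ell P$, giving $C_i-\epsilon_i^{k+h_i}((\ell+1)P)\ge C_i-\epsilon_i^k(\ell P)$; summing over the jobs pending at $\ell P$, whose index-shifted copies all contribute to $B(\ell+1)$, yields $B(\ell+1)\ge B(\ell)$. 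Because the state $\theta_i$ of each task ranges over a finite set (the number of active jobs is at most $\lceil D_i/T_i\rceil$ and the two time coordinates are bounded), $B(\ell)$ is bounded, hence eventually constant; since the arrival pattern repeats with period $P$ and $A$ is memoryless and deterministic, this gives $\theta(\ell P)=\theta(KP)$ for all $\ell\ge K$ for some $K$, so the schedule is periodic with period $P$ from $KP$ (equivalently, one may simply invoke Corollary~\ref{task-levelperiod}; the point of re-deriving it here is that it also yields the monotonicity of $B$).

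What is left, and what I expect to be the real obstacle, is to rule out a \emph{stable positive} pending workload. Under the assumptions $A$-feasible and $\theta(0)\ne\theta(P)$, the remark of the first paragraph gives $B(1)>0$, and monotonicity then gives $B(\ell)\ge B(1)>0$ for all $\ell\ge1$, so $B$ stabilises at some $B^{\ast}\ge B(1)>0$ and the schedule is periodic with period $P$ from some $KP$ while carrying strictly positive pending work in every period. Unwinding $B(1)>0$, some task $\tau_i$ has $\epsilon_i^g(P)<C_i$ for a $g\le h_i$; since all deadlines are met in $[0,P)$ the deadline of $\delta_i^g$ is at least $P$, so $P$ lies in its window, and iterating Lemma~\ref{prepTh} gives $\epsilon_i^{g+\ell h_i}((\ell+1)P)<C_i$ for every $\ell\ge0$ — a job of $\tau_i$ survives across every period boundary. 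The intended way to finish is to exploit that task parallelism is forbidden, so $\tau_i$ is served by at most one processor at a time; a work-conserving scheduler may not idle a processor eligible for $\tau_i$ while $\tau_i$'s queue is nonempty, and the synchronous re-arrivals keep that queue nonempty; combined with the monotonicity of the one-period state transition (which lets one transport the eventual period-$P$ periodicity back towards the origin), this should force the pending work to have been zero already at $P$, i.e. $\theta(0)=\theta(P)$, contradicting the hypothesis — equivalently, it should show that the persistent backlog eventually pushes some job of $\tau_i$ past its deadline, contradicting feasibility. Making this last step precise, rather than the bookkeeping with $\epsilon_i^k$ and $B(\ell)$, is the delicate part of the proof.
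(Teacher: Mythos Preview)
Your proposal is incomplete, and the gap you flag at the end is not a matter of bookkeeping but a genuine missing idea. The tools you have assembled --- Lemma~\ref{prepTh}, the aggregate backlog $B(\ell)$, and eventual periodicity from Corollary~\ref{task-levelperiod} --- are not enough to rule out a \emph{stable} positive backlog. With arbitrary deadlines it is perfectly consistent for a feasible, $P$-periodic schedule to carry the same nonzero pending work across every period boundary: each period completes exactly one period's worth of work, every job finishes within its (large) deadline, yet $B(\ell)=B^{\ast}>0$ for all $\ell\ge K$. Your monotonicity argument shows $B$ cannot decrease, and boundedness shows it stabilises, but nothing in your setup forces $B^{\ast}=0$. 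The informal appeal to ``work-conserving plus nonempty queue'' does not close the gap either: even if $\tau_i$ has backlog at $P$, there may be instants in $[0,P)$ where $\tau_i$ has \emph{no} active job (early jobs may finish before the next release), so you cannot claim $\tau_i$ occupies every available slot.

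The paper's proof supplies exactly the missing structural idea, and it hinges on the \emph{task-level} fixed-priority hypothesis in a way your approach does not. Instead of tracking a global backlog, the paper isolates the \emph{highest-priority} task $\tau_\ell$ that has at least two active jobs at $P_\ell$. By minimality of $\ell$, the subsystem $\tau^{(\ell-1)}$ satisfies $\theta(0)=\theta(P_{\ell-1})$, so Lemma~\ref{synPerLem} makes $\sigma^{(\ell-1)}$ periodic with period $P_{\ell-1}$ \emph{from time~$0$} --- not merely from some late $KP$. Hence the processor availability seen by $\tau_\ell$ is identical in every window $[kP_\ell,(k+1)P_\ell)$; since task parallelism is forbidden, $\tau_\ell$ receives the same service $t_\ell$ in each such window, and the shortfall $\frac{P_\ell}{T_\ell}C_\ell-t_\ell>0$ accumulates \emph{linearly} until a deadline of $\tau_\ell$ is missed. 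Lower-priority tasks cannot rescue this. The crucial difference from your outline is that the layered priority structure gives periodicity of the higher-priority schedule \emph{from the origin}, which converts ``nonzero backlog'' into ``linearly growing backlog'' --- something your global, non-layered argument cannot achieve.
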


\begin{proof} 
In the following, we denote by $\sigma^{(i)}$ the schedule of the task subset $\tau^{(i)}$. Since $\theta(0) \neq \theta(P)$, there is more than one active job of the same task at $P$. We define $\ell \in \{1, 2, \ldots, n \}$ to be the smallest task index such that $\tau_{\ell}$ has at least two active jobs at $P_{\ell}$. In order to prove the property we will prove that $\tau_{\ell}$ will miss a deadline.

By definition of $\ell$ we have that $\theta(0) = \theta(P_{\ell-1})$ (at least for the schedule $\sigma^{(\ell-1)}$) and
by Lemma~\ref{synPerLem} we have that the time instants, such that at
least one processor is available, are periodic with a period
$P_{\ell-1}$, i.e., the schedule $\sigma^{(\ell-1)}$ obtained by
considering only the task subset $\tau^{(\ell-1)}$ is periodic with
a period $P_{\ell-1}$. Moreover, since $P_{\ell}$ is a multiple of
$P_{\ell-1}$, we know that the schedule $\sigma^{(\ell-1)}$ is periodic with
a period $P_{\ell}$. Therefore in each time interval $[ k \cdot
P_{\ell},(k+1) P_{\ell})$ with $k \geq 0$ after scheduling $\tau_1,
\tau_2, \ldots, \tau_{\ell -1}$ there is the same number $t_{\ell}$ of
time instants such that at least one processor is available and where
$\tau_{\ell}$ is scheduled. At time instant $P_{\ell}$, since the task
parallelism is forbidden, there are $\frac{P_{\ell}}{T_{\ell}} C_{\ell}-t_{\ell}$
remaining units for execution of $\tau_{\ell}$ and, consequently,
at each time instant $(k+1) \cdot P_{\ell}$ there will be $k \cdot
(\frac{P_{\ell}}{T_{\ell}} C_{\ell}-t_{\ell})$ remaining units for
execution of $\tau_{\ell}$. Consequently we can find $k_{\ell}= \left\lceil
\frac {D_{\ell}}{{P_{\ell}}/{T_{\ell}} \cdot (C_{\ell}-t_{\ell})} \right\rceil$ such that the job actived at $(k_{\ell}+1)P_{\ell}$ will miss its deadline
since it cannot be scheduled before older jobs of $\tau_{\ell}$ and
there are $k_{\ell}({P_{\ell}}/{T_{\ell}} \cdot (C_{\ell}-t_{\ell})) \geq
D_{\ell}$ remaining units for execution of $\tau_{\ell}$ at
$(k_{\ell}+1)P_{\ell}$.

Since we consider task-level fixed-priority scheduling, then the
tasks $\tau_i$ with $i > \ell$ will not interfere with the higher
priority tasks already scheduled, particularly with $\tau_{\ell}$
that misses its deadline, and consequently the system is not
$A$-feasible.
\end{proof}

\begin{Corollary}\label{synPerArbi}
For any preemptive task-level fixed-priority algorithm $A$ and any synchronous
arbitrary deadline system $\tau$ on $m$ unrelated processors, if $\tau$ is $A$-feasible, then the schedule of $A$ is periodic with a period $P$ that begins at instant $0$.
\end{Corollary}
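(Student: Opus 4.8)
The plan is to derive the corollary as an immediate consequence of Theorem~\ref{synPerNotFeb} and Lemma~\ref{synPerLem}, which between them exhaust the two possibilities for the state of the system at time $P$. First I would record the only fact about feasibility that is needed: since $\tau$ is $A$-feasible, the schedule produced by $A$ never misses a deadline, so in particular all deadlines are met in the interval $[0,P)$. This is precisely the hypothesis shared by both of the preceding results, so it costs nothing to assume it.

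Next I would perform a case split on whether $\theta(0)=\theta(P)$. The case $\theta(0)\neq\theta(P)$ is impossible: by Theorem~\ref{synPerNotFeb}, the conjunction ``all deadlines met in $[0,P)$'' and ``$\theta(0)\neq\theta(P)$'' forces $\tau$ to be not $A$-feasible, contradicting the standing hypothesis. Hence necessarily $\theta(0)=\theta(P)$. (This is just the contrapositive of Theorem~\ref{synPerNotFeb} specialized to a feasible system.)

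Finally, having established $\theta(0)=\theta(P)$ together with the absence of any missed deadline in $[0,P)$, I would invoke Lemma~\ref{synPerLem} verbatim: under exactly these two hypotheses it states that the schedule of $\tau$ under $A$ is periodic with period $P$ beginning at instant $0$, which is the conclusion of the corollary.

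I do not expect any genuine obstacle in this argument; it is a routine recombination of the two preceding statements. All the substantive, anomaly-sensitive reasoning — tracking accumulated execution $\epsilon_i^k$ across a window of length $P$, and isolating the smallest-index task that accumulates a backlog when $\theta(0)\neq\theta(P)$ — has already been discharged in the proofs of Lemma~\ref{synPerLem} and Theorem~\ref{synPerNotFeb}, so here it suffices to point to them. The only care required is to state the use of the contrapositive of Theorem~\ref{synPerNotFeb} cleanly, since that theorem is phrased as an infeasibility result rather than a periodicity result.
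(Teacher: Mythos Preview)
Your proposal is correct and matches the paper's own proof almost exactly: the paper uses the contrapositive of Theorem~\ref{synPerNotFeb} to conclude $\theta(0)=\theta(P)$, and then invokes determinism/memorylessness to obtain periodicity from~$0$. The only cosmetic difference is that you cite Lemma~\ref{synPerLem} for the second step, whereas the paper restates its one-line argument inline.
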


\begin{proof}
Since $\tau$ is $A$-feasible, we know by Theorem~\ref{synPerNotFeb} that $\theta(0)=\theta(P)$. Moreover, a deterministic and memoryless scheduling algorithm will make the same scheduling decision at those instants. Consequently, the schedule repeats from the origin with a period of $P$.
\end{proof}

\subsection{Task-level fixed-priority scheduling of asynchronous constrained
  deadline systems} \label{asynSect}

In this section we give another important result: any feasible schedules on $m$ unrelated processors of asynchronous constrained deadline systems,
obtained using preemptive task-level fixed-priority algorithms, are periodic from some point (Theorem~\ref{asynPer}) and we characterize that point.

Without loss of generality we consider the tasks ordered in decreasing
order of their priorities $\tau_1 > \tau_2 > \cdots > \tau_n$.

\begin{Theorem} \label{asynPer} For any preemptive task-level fixed-priority
algorithm $A$ and any $A$-feasible asynchronous constrained deadline system $\tau$ upon $m$ unrelated processors is periodic with a period $P$ from instant $S_n$ where $S_i$ is defined inductively as follows:

  \begin{itemize}
  \item $S_1 \equals O_1$; 
  \item $S_i \equals \max \{ O_i, O_i+ \lceil \frac{S_{i-1}-O_i}{T_i}
    \rceil T_i \}, \forall i \in \{2,3, \ldots, n \}$.
  \end{itemize}
\end{Theorem}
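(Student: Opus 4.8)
The plan is to prove, by induction on $i\in\{1,\dots,n\}$, the stronger statement that the $A$-schedule of the subsystem $\tau^{(i)}$ \emph{scheduled alone} is periodic with period $P_i$ from instant $S_i$; the theorem is then the case $i=n$, since $\tau^{(n)}=\tau$ and $P_n=P$. I would first record two elementary consequences of the inductive definition of the $S_i$. A short case analysis on whether $S_{i-1}\le O_i$ shows that $S_1\le S_2\le\cdots\le S_n$ and that $S_i\ge\max\{O_1,\dots,O_i\}$; and in either case $S_i$ is a \emph{release instant} of $\tau_i$, i.e.\ $S_i=O_i+k_iT_i$ for some integer $k_i\ge 0$ (when $S_{i-1}>O_i$ it is precisely the first release of $\tau_i$ at or after $S_{i-1}$).

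Second, I would use the task-level fixed-priority assumption together with the greedy form of the work-conserving rule (Definition~\ref{defWorkC}): when $A$ builds $\sigma(t)$ it assigns active jobs to processors in order of decreasing priority, so the jobs of $\tau_1,\dots,\tau_{i-1}$ — all of higher priority than any job of $\tau_i$ — are placed, each on its fastest available eligible processor, \emph{before} $\tau_i$ is considered, hence onto exactly the processors they would receive if $\tau_i$ were absent. Therefore the restriction of the full schedule (and of the state components $\theta_1,\dots,\theta_{i-1}$) to $\tau^{(i-1)}$ coincides with the $A$-schedule of $\tau^{(i-1)}$ scheduled alone, which is what legitimizes the inductive bookkeeping. (Since deadlines are constrained and $\tau$ is $A$-feasible, at most one job of each task is ever active, so the FIFO tie-break among jobs of a single task never intervenes.)

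For the inductive step, assuming $\sigma^{(i-1)}$ is periodic with period $P_{i-1}$ from $S_{i-1}$, I would show $\theta^{(i)}(S_i)=\theta^{(i)}(S_i+P_i)$; the conclusion then follows exactly as in Lemma~\ref{thZero} and Theorem~\ref{synPer}, because the arrival pattern of $\tau^{(i)}$ has period $P_i$ from $\max\{O_1,\dots,O_i\}\le S_i$ and $A$ is deterministic and memoryless, so equality of states at $S_i$ and $S_i+P_i$ propagates to all later instants. For the components $\theta_j$ with $j<i$: these are periodic with period $P_{i-1}$ (hence with $P_i$, since $P_{i-1}\mid P_i$) from $S_{i-1}\le S_i$, so they agree at $S_i$ and $S_i+P_i$. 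For $\theta_i$: both $S_i$ and $S_i+P_i$ are release instants of $\tau_i$ lying at or after $O_i$; by $A$-feasibility and $D_i\le T_i$, the job of $\tau_i$ released at $S_i-T_i$ (if $S_i>O_i$) has its deadline at $S_i-T_i+D_i\le S_i$ and so has completed by $S_i$, whence at $S_i$ the only active job of $\tau_i$ is the one just released with zero executed units; the identical argument at $S_i+P_i$ gives the same state for $\theta_i$. The base case $i=1$ is this same argument, with $\tau_1$ always run on its fastest eligible processor, so that at $O_1$ and $O_1+T_1$ the subsystem is in the same state.

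I expect the delicate point to be the justification that the lower-priority task $\tau_i$ does not perturb the schedule of $\tau^{(i-1)}$ on \emph{unrelated} processors — this is exactly where the interaction between ``task-level fixed-priority'' and the particular greedy form of the work-conserving rule must be handled carefully — together with the bookkeeping that every $S_i$ is genuinely a release instant of $\tau_i$ and that $S_i\ge S_{i-1}$, so that the periodicity of $\sigma^{(i-1)}$ established by the induction hypothesis can be transported to the reference instant $S_i$ used for $\tau^{(i)}$.
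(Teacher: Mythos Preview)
Your proposal is correct and follows essentially the same inductive argument as the paper: both induct on the task subset $\tau^{(i)}$, use that lower-priority tasks do not perturb the placement of higher-priority ones under the work-conserving rule, and exploit constrained deadlines plus $A$-feasibility to ensure $\tau_i$ starts fresh at the release instants $S_i$ and $S_i+P_i$. The only difference is cosmetic---the paper carries the induction through the schedule $\sigma^{(i)}$ and processor availability $a^{(i)}$, whereas you carry it through the system state $\theta^{(i)}$; for your framing to close cleanly, state the inductive hypothesis as periodicity of $\theta^{(i-1)}$ (not merely of $\sigma^{(i-1)}$) from $S_{i-1}$, since that is precisely what you invoke to get $\theta_j(S_i)=\theta_j(S_i+P_i)$ for $j<i$ and what your state-propagation argument in fact establishes.
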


\begin{proof}
  The proof is made by induction by $n$ (the number of tasks). We
  denote by $\sigma^{(i)}$ the schedule obtained by considering only
  the task subset $\tau^{(i)}$, the first higher priority $i$ tasks
  $\{\tau_1, \ldots, \tau_i \}$, and by $a^{(i)}$ the corresponding
  availability of the processors. Our inductive hypothesis is the
  following: the schedule $\sigma^{(k)}$ is periodic from $S_k$ with a
  period $P_k$ for all $1 \leq k \leq i$.

  The property is true in the base case: $\sigma^{(1)}$ is periodic
  from $S_1=O_1$ with period $P_1$, for $\tau^{(1)}= \{\tau_1 \}$:
  since we consider constrained deadline systems, at instant $P_1=T_1$ the
  previous request of $\tau_1$ has finished its execution and the
  schedule repeats.

  We will now show that any $A$-feasible schedules of $\tau^{(i+1)}$ are
  periodic with period $P_{i+1}$ from $S_{i+1}$.

  Since $\sigma^{(i)}$ is periodic with a period $P_{i}$ from $S_{i}$ the
  following equation is verified:

\begin{equation}
  \label{stateInter}
\sigma^{(i)}(t)=\sigma^{(i)}(t+P_i), \forall t \geq S_{i}.
\end{equation}

We denote by $S_{i+1} \equals \max \{ O_{i+1}, O_{i+1}+ \lceil
\frac{S_{i}-O_{i+1}}{T_{i+1}} \rceil T_{i+1} \}$ the first request of
$\tau_{i+1}$ not before $S_i$.

Since the tasks in $\tau^{(i)}$ have higher priority than
$\tau_{i+1}$, then the scheduling of $\tau_{i+1}$ will not interfere with
higher priority tasks which are already scheduled. Therefore, we may
build $\sigma^{(i+1)}$ from $\sigma^{(i)}$ such that the tasks
$\tau_1, \tau_2, \ldots, \tau_i$ are scheduled at the very same
instants and on the very same processors as they were in
$\sigma^{(i)}$. We apply now the induction step: for all $t \geq
S_{i}$ in $\sigma^{(i)}$ we have $a^{(i)}(t) = a^{(i)}(t +P_i)$ the
availability of the processors repeats. Notice that at those instants $t$
and $t+P_i$ the available processors (if any) are the same. Consequently, at
only these instants task $\tau_{i+1}$ {\em may} be executed.

The instants $t$ with $S_{i+1} \leq t < S_{i+1}+P_{i+1}$, where
$\tau_{i+1}$ may be executed in $\sigma^{(i+1)}$, are periodic with
period $P_{i+1} = \lcm\{P_{i},T_{i+1}\}$. Moreover, since the system is feasible and we consider constrained deadlines, the only active request of $\tau_{i+1}$ at $S_{i+1}$ (respectively at $S_{i+1}+P_{i+1}$) is the one activated at $S_{i+1}$ (respectively at $S_{i+1}+P_{i+1}$). Consequently, the instants at which the task-level fixed-priority algorithm $A$ schedules $\tau_{i+1}$ are periodic with period $P_{i+1}$. Therefore the schedule $\sigma^{(i+1)}$ repeats from $S_{i+1}$ with period equal to $P_{i+1}$ and the property is true for all $1 \leq k \leq n$, in particular for $k=n:$ $\sigma^{(n)}$ is periodic with period equal to $P$ from $S_n$ and the property follows.
\end{proof}

\subsection{Task-level fixed-priority scheduling of asynchronous arbitrary 
  deadline systems} \label{asynSectArb}

In this section we present another important result: any feasible schedule on
$m$ unrelated processors of asynchronous arbitrary deadline systems,
obtained using preemptive task-level fixed-priority algorithms, is periodic from
some point (Theorem~\ref{asynPerbis}).

\begin{Corollary}\label{prepThbis} 
  For any preemptive task-level fixed-priority algorithm $A$ and any asynchronous
  arbitrary deadline system $\tau$ on $m$ unrelated processors, we
  have that: for each task $\tau_i$, for any time instant $t \geq O_i$
  and $k$ such that $R_i^k \leq t \leq R_i^k+D_i$, if there is no
  deadline missed up to time $t+P$, then $\epsilon_{i}^{k}(t) \geq
  \epsilon_{i}^{k+h_i}(t+P)$ with $h_i \equals \frac{P}{T_i}$.
\end{Corollary}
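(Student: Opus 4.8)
The plan is to obtain Corollary~\ref{prepThbis} as an immediate specialization of Lemma~\ref{prepTh}, exactly in the spirit in which Corollary~\ref{task-levelperiod} was derived from Theorem~\ref{perGen}. It suffices to verify that every preemptive task-level fixed-priority algorithm lies in the class covered by Lemma~\ref{prepTh}, i.e., that it is preemptive, job-level fixed-priority and request-dependent; once this is done, Lemma~\ref{prepTh} applies verbatim and gives the conclusion.

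First I would argue that a task-level fixed-priority algorithm is job-level fixed-priority in the sense of Definition~\ref{prioDr}. By Definition~\ref{task-level} every job inherits the statically assigned priority of its generating task and keeps it for its whole lifetime; hence for any two jobs $J_i$ and $J_j$ the relative priority is fixed once and for all by the priorities of the two tasks that generate them. In particular, if $J_i$ has higher priority than $J_j$ at some instant, it does so at every instant, which is precisely the condition required by Definition~\ref{prioDr}.

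Next I would check request-dependence. For a task-level fixed-priority scheduler the priority of a request $\delta_i^k(t)$ coincides with the priority of task $\tau_i$ and therefore does not depend on the job index $k$ nor on the time $t$. Consequently $\delta_i^{k+h_i}(t+P) > \delta_j^{\ell+h_j}(t+P)$ holds if and only if $\tau_i$ has higher priority than $\tau_j$, which in turn holds if and only if $\delta_i^k(t) > \delta_j^\ell(t)$; this is exactly the defining equivalence of a request-dependent scheduler.

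Having established these two inclusions (preemptiveness being inherited directly from the hypothesis on $A$), Lemma~\ref{prepTh} applies to $A$ and to the asynchronous arbitrary deadline system $\tau$ upon the $m$ unrelated processors, and yields $\epsilon_i^k(t) \geq \epsilon_i^{k+h_i}(t+P)$ under the stated conditions $R_i^k \leq t \leq R_i^k + D_i$ and ``no deadline missed up to $t+P$''. I do not anticipate any genuine obstacle: the only point needing a little care is the faithful matching of the hypotheses of Lemma~\ref{prepTh} with the structural properties of task-level fixed-priority scheduling guaranteed by Definitions~\ref{task-level} and~\ref{prioDr}.
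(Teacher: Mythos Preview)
Your proposal is correct and follows exactly the paper's own approach: the paper's proof simply states that the result is a direct consequence of Lemma~\ref{prepTh} since preemptive task-level fixed-priority algorithms are job-level fixed-priority and request-dependent schedulers. Your additional verification of these two structural properties is more detailed than the paper's one-line justification, but the argument is the same.
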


\begin{proof}
  This result is direct consequence of Lemma~\ref{prepTh} since preemptive
  task-level fixed-priority algorithms are job-level fixed-priority and request-dependent schedulers.
\end{proof}



\begin{Corollary} \label{Coreither}
  For any preemptive task-level fixed-priority algorithm $A$ and any asynchronous
  arbitrary deadline system $\tau$ on $m$ unrelated processors, we
  have that: for each task $\tau_i$, for any time instant $t \geq
  O_i$, if there is no deadline missed up to time $t+P$, then either
  $(\alpha_i(t) < \alpha_i(t+P))$ or $[\alpha_i(t)= \alpha_i(t+P)$ and
  $\gamma_i(t) \geq \gamma_i(t+P)]$, where by the triple 
  $(\alpha_i(t), \beta_i(t), \gamma_i(t))$ we denoted $\theta_i(t)$.
\end{Corollary}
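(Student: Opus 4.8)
The plan is to reduce the statement to Corollary~\ref{prepThbis} applied to the \emph{oldest} active job of $\tau_i$, using that the releases of $\tau_i$ are $P$-periodic so that a job active at $t$ and the ``corresponding'' job active at $t+P$ have indices differing by exactly $h_i=P/T_i$. Two structural facts will do the work. \emph{(a)} Since $A$ is work-conserving, task parallelism is forbidden, and the jobs of a task are served oldest-first, a task is serviced one job at a time in index order; hence at any instant the active jobs of $\tau_i$ are exactly $\delta_i^{c+1},\dots,\delta_i^{K_i(t)}$, where $K_i(t)$ is the index of the most recently released job of $\tau_i$ at $t$ and $\delta_i^1,\dots,\delta_i^{c}$ are the completed ones; in particular, if $\alpha_i(t)\geq 1$ then the oldest active job is $\delta_i^{\,K_i(t)-\alpha_i(t)+1}$ and $\gamma_i(t)=\epsilon_i^{\,K_i(t)-\alpha_i(t)+1}(t)$. \emph{(b)} Because a job of $\tau_i$ is released every $T_i$ units and $h_iT_i=P$, we have $K_i(t+P)=K_i(t)+h_i$; and since there is no deadline miss up to $t+P$, every job $\delta_i^k$ active at $t$ satisfies $R_i^k\leq t\leq R_i^k+D_i$ (if $t>R_i^k+D_i$, then since $\epsilon_i^k$ is non-decreasing we would get $\epsilon_i^k(R_i^k+D_i)\leq\epsilon_i^k(t)<C_i$, a deadline miss at $R_i^k+D_i<t+P$).

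Next I would prove $\alpha_i(t)\leq\alpha_i(t+P)$. For any job $\delta_i^k$ active at $t$, fact \emph{(b)} lets me invoke Corollary~\ref{prepThbis}: $\epsilon_i^{k+h_i}(t+P)\leq\epsilon_i^k(t)<C_i$; together with $R_i^{k+h_i}=R_i^k+P\leq t+P\leq R_i^{k+h_i}+D_i$ this shows $\delta_i^{k+h_i}$ is released, uncompleted, and within its deadline window at $t+P$, i.e. active there. Since $k\mapsto k+h_i$ is injective, $\alpha_i(t)\leq\alpha_i(t+P)$.

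If this inequality is strict, the first alternative of the statement holds; otherwise $\alpha_i(t)=\alpha_i(t+P)=:a$. If $a=0$ there is no active job of $\tau_i$ at either instant, so $\gamma_i(t)\geq\gamma_i(t+P)$ holds trivially (both being $0$ by convention). If $a\geq 1$, then by \emph{(a)} and \emph{(b)} the oldest active jobs at $t$ and $t+P$ are $\delta_i^{k}$ and $\delta_i^{k+h_i}$ with $k=K_i(t)-a+1$; as $\delta_i^k$ is active at $t$, fact \emph{(b)} makes Corollary~\ref{prepThbis} applicable and gives $\gamma_i(t)=\epsilon_i^k(t)\geq\epsilon_i^{k+h_i}(t+P)=\gamma_i(t+P)$, which is the second alternative.

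The only delicate point, which I would write out in full, is structural fact \emph{(a)}: that the active jobs of $\tau_i$ always form a contiguous block of indices ending at the most recently released one. This is precisely where ``task parallelism forbidden'' together with FIFO service of a task's jobs are used, and it is what lets me pin down unambiguously the job that $\gamma_i$ refers to at each instant and match it across the $P$-shift. Everything else is routine bookkeeping with Corollary~\ref{prepThbis} and the $P$-periodicity of the release pattern.
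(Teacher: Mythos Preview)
Your proposal is correct and follows essentially the same approach as the paper. The paper expresses the same counting argument more tersely via $\alpha_i(t)=n_i(t)-m_i(t)$ (released minus completed), obtains $n_i(t+P)=n_i(t)+h_i$ and, from Corollary~\ref{prepThbis}, $m_i(t+P)\leq m_i(t)+h_i$; when equality holds, the oldest active jobs at $t$ and $t+P$ have indices $m_i(t)+1$ and $m_i(t)+1+h_i$, and Corollary~\ref{prepThbis} gives the $\gamma$-inequality --- exactly your structural fact~(a) and your matching of oldest active jobs, only left implicit.
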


\begin{proof} If $\alpha_i(t)=0$, then either $\alpha_i(t+P) > 0$ or
  $\alpha_i(t+P)=0=\beta_i(t+P)= \beta_i(t)$. Otherwise,
  $\alpha_i(t)=n_i(t) - m_i(t)$ where $n_i(t)$ is the number of jobs
  actived before or at $t$, and $m_i(t)$ is the number of jobs that
  have completed their execution before or at $t$. We have
  $n_i(t+P)=n_i(t)+ \frac{P}{T_i}$ and by Corollary~\ref{prepThbis} we
  obtain that $m_i(t+P) \leq m_i(t) + \frac{P}{T_i}$. Consequently
  $\alpha_i(t+P)\geq \alpha_i(t)$, and if $\alpha_i(t)=\alpha_i(t+P)$
  then $m_i(t+P)= m_i(t)+ \frac{P}{T_i}$, and $\beta_i(t)=
  \epsilon^{m_i(t)+1} \geq
  \epsilon_i^{m_i(t)+1+\frac{P}{T_i}}(t+P)=\beta_i(t+P)$.
\end{proof}


\begin{Theorem} \label{asynPerbis} For any preemptive task-level fixed-priority
  algorithm $A$ and any $A$-feasible asynchronous arbitrary deadline system $\tau$
  upon $m$ unrelated processors is periodic with a period $P$ from instant $\widehat{S}_n$ where $\widehat{S}_n$
  are defined inductively as follows: 

  \begin{itemize}
  \item $\widehat{S}_1 \equals O_1$
  \item $\widehat{S}_i \equals \max \{ O_i, O_i+ \lceil
    \frac{\widehat{S}_{i-1}-O_i}{T_i} \rceil T_i \} + P_i, \qquad (i >
    1$)
   \end{itemize}
\end{Theorem}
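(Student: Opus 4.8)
The plan is to mirror the inductive structure of Theorem~\ref{asynPer}, but to account for the fact that with arbitrary deadlines the state at a request instant of $\tau_{i+1}$ no longer determines the schedule, because several jobs of $\tau_{i+1}$ (or of higher-priority tasks) may be simultaneously active. The inductive hypothesis will be: the schedule $\sigma^{(k)}$ restricted to $\tau^{(k)}$ is periodic with period $P_k$ from $\widehat{S}_k$, for all $1 \leq k \leq i$. The base case $\widehat{S}_1 = O_1$ is immediate since a single task trivially repeats with period $T_1 = P_1$ once released (here the $+P_1$ term is harmless because $\widehat S_1$ just needs to be at or after $O_1$; one can also simply note $\widehat S_1 = O_1$ works and the formula for $i>1$ is what matters). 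For the induction step, as in Theorem~\ref{asynPer} I would build $\sigma^{(i+1)}$ on top of $\sigma^{(i)}$, using that higher-priority tasks are unaffected by $\tau_{i+1}$, so the availability pattern $a^{(i)}(t)$ seen by $\tau_{i+1}$ is periodic with period $P_i$ (hence with period $P_{i+1} = \lcm\{P_i, T_{i+1}\}$) from $\widehat{S}_i$.

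The key new ingredient is Corollary~\ref{Coreither}, which plays the role that ``at most one active job'' played in the constrained case. Let $S_{i+1}' \equals \max\{O_{i+1}, O_{i+1} + \lceil \frac{\widehat S_i - O_{i+1}}{T_{i+1}}\rceil T_{i+1}\}$ be the first release of $\tau_{i+1}$ at or after $\widehat S_i$, so that $\widehat S_{i+1} = S_{i+1}' + P_{i+1}$. I would argue that at every instant $t \geq \widehat S_{i+1}$ the component $\theta_{i+1}(t)$ equals $\theta_{i+1}(t+P_{i+1})$: since the system is $A$-feasible no deadline is ever missed, so Corollary~\ref{Coreither} gives, for the points $S_{i+1}'$ and $S_{i+1}'+P_{i+1}$, that $\alpha_{i+1}(S_{i+1}') \leq \alpha_{i+1}(S_{i+1}'+P_{i+1})$ and, iterating, $\alpha_{i+1}(S_{i+1}'+kP_{i+1})$ is non-decreasing in $k$. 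Because $\alpha_{i+1}$ is bounded (by $\lceil D_{i+1}/T_{i+1}\rceil$) and integer-valued, it must stabilize; the extra ``$+P_{i+1}$'' in $\widehat S_{i+1}$ guarantees we have moved at least one full period past $S_{i+1}'$, far enough that the value has settled and, with $\alpha_{i+1}$ equal at $t$ and $t+P_{i+1}$, Corollary~\ref{Coreither} forces $\gamma_{i+1}(t) \geq \gamma_{i+1}(t+P_{i+1})$; combined with a symmetric/work-conserving argument (the same total work arrives each period and no deadline is missed, so nothing can be ``behind'') one gets $\gamma_{i+1}(t) = \gamma_{i+1}(t+P_{i+1})$, and $\beta_{i+1}$ agrees because the arrival pattern is $P_{i+1}$-periodic. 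Hence $\theta_{i+1}(t) = \theta_{i+1}(t+P_{i+1})$ for $t \geq \widehat S_{i+1}$.

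Once $\theta_{i+1}$ is $P_{i+1}$-periodic from $\widehat S_{i+1}$ and the availability $a^{(i)}$ is $P_{i+1}$-periodic from $\widehat S_i \leq \widehat S_{i+1}$, determinism and memorylessness of $A$ (restricted to the subsystem $\tau^{(i+1)}$, which is legitimate since tasks of index $>i+1$ do not interfere) imply $\sigma^{(i+1)}(t) = \sigma^{(i+1)}(t+P_{i+1})$ for all $t \geq \widehat S_{i+1}$. This closes the induction, and taking $k=n$ gives periodicity of the full schedule with period $P_n = P$ from $\widehat S_n$.

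The main obstacle I anticipate is the argument that $\alpha_{i+1}$ has genuinely stabilized by time $\widehat S_{i+1} = S_{i+1}' + P_{i+1}$ rather than merely being non-decreasing: a single extra period bounds how much the backlog can change but one must check it cannot keep oscillating or drifting. The cleanest way is probably to combine Corollary~\ref{Coreither} with a counting/conservation argument: over any window of length $P_{i+1}$ exactly $P_{i+1}/T_{i+1}$ jobs of $\tau_{i+1}$ are released and (by feasibility) all eventually complete, and the number of slots available to $\tau_{i+1}$ in $\sigma^{(i+1)}$ over such a window is constant (by periodicity of $a^{(i)}$ and of the higher-priority sub-schedule together with $\theta_{i+1}$-periodicity being bootstrapped); hence the backlog of $\tau_{i+1}$ cannot strictly increase past one period, so $\alpha_{i+1}(S_{i+1}'+P_{i+1}) = \alpha_{i+1}(S_{i+1}'+2P_{i+1})$ already, and one then propagates equality of the whole triple forward. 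Making this conservation step fully rigorous — in particular handling the interaction of the FIFO service rule within $\tau_{i+1}$ with the periodic availability — is the technical heart of the proof; everything else is a routine adaptation of Theorem~\ref{asynPer}.
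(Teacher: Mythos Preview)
Your inductive skeleton and use of Corollary~\ref{Coreither} match the paper, but the gap you flag is real and your patch does not close it. The assertion that ``the extra $+P_{i+1}$ guarantees the value has settled'' is unsupported: monotonicity plus boundedness of $\alpha_{i+1}$ gives only \emph{eventual} stabilization, and your conservation sketch then invokes ``$\theta_{i+1}$-periodicity being bootstrapped'' to count the slots actually used by $\tau_{i+1}$, which is circular --- that periodicity is exactly what you are trying to establish. More fundamentally, you misread the role of the $+P_{i+1}$: it is not a settling time but is there so that from any $t' \in [\widehat{S}_{i+1}, \widehat{S}_{i+1}+P_{i+1})$ one can step \emph{back} a full $P_{i+1}$ and still land at or after $S'_{i+1} \geq \max(\widehat{S}_i, O_{i+1})$, where both the inductive hypothesis on $\sigma^{(i)}$ and Corollary~\ref{Coreither} (for the subsystem $\tau^{(i+1)}$) are available.

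The paper fills the gap by a two-case contradiction. Assume $\theta(\widehat{S}_{i+1}) \neq \theta(\widehat{S}_{i+1}+P_{i+1})$. Case~(a): some $t' \in [\widehat{S}_{i+1}, \widehat{S}_{i+1}+P_{i+1})$ has a processor available in $\sigma^{(i)}$ while no job of $\tau_{i+1}$ is scheduled; work-conservation gives $\alpha_{i+1}(t')=0$, Corollary~\ref{Coreither} then forces $\alpha_{i+1}(t'-P_{i+1})=0$ as well, so the full state of $\tau^{(i+1)}$ coincides at $t'-P_{i+1}$ and $t'$ (the higher-priority components already agree by induction), and determinism yields periodicity from $t'-P_{i+1}<\widehat{S}_{i+1}$ --- contradicting the assumption. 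Case~(b): $\tau_{i+1}$ consumes every available slot in that window; then the service it receives is exactly the periodic capacity offered by $a^{(i)}$, yet by Corollary~\ref{Coreither} its backlog strictly worsened over the window, and since the same capacity recurs in every subsequent window the backlog keeps worsening, contradicting $A$-feasibility. Your conservation idea is essentially Case~(b), but without the idle-slot Case~(a) the argument cannot be completed.
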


\begin{proof}
  The proof is made by induction by $n$ (the number of tasks). We
  denote by $\sigma^{(i)}$ the schedule obtained by considering only
  the task subset $\tau^{(i)}$, the first higher priority $i$ tasks
  $\{\tau_1, \ldots, \tau_i \}$, and by $a^{(i)}$ the corresponding
  availability of the processors. Our inductive hypothesis is the
  following: the schedule $\sigma^{(k)}$ is periodic from $\widehat{S}_k$
  with a period $P_k$, for all $1 \leq k \leq i$.

  The property is true in the base case: $\sigma^{(1)}$ is periodic
  from $\widehat{S}_1=O_1$ with period $P_1=T_{1}$, for $\tau^{(1)}= \{\tau_1 \}$:
  since we consider feasible systems, at instant $P_1+O_1=T_1+O_1$ the
  previous job of $\tau_1$ has finished its execution ($C_1 \leq
  T_1$) and the schedule repeats.

  We will now show that any $A$-feasible schedule of $\tau^{(i+1)}$
  is periodic with period $P_{i+1}$ from $\widehat{S}_{i+1}$.

  Since $\sigma^{(i)}$ is periodic with a period $P_{i}$ from $\widehat{S}_{i}$ the
  following equation is verified:

\begin{equation}
  \label{stateInterbis}
\sigma^{(i)}(t)=\sigma^{(i)}(t+P_i), \forall t \geq \widehat{S}_{i}.
\end{equation}

We denote by $\widehat{S}_{i+1} \equals \max \{ O_{i+1}, O_{i+1}+ \lceil
\frac{\widehat{S}_{i}-O_{i+1}}{T_{i+1}} \rceil T_{i+1} \} +P_{i+1}$ the time
instant obtained by adding $P_{i+1}$ to the time instant which
corresponds to the first activation of $\tau_{i+1}$ after $\widehat{S}_i$.

Since the tasks in $\tau^{(i)}$ have higher priority than
$\tau_{i+1}$, then the scheduling of $\tau_{i+1}$ will not interfere
with higher priority tasks which are already scheduled. Therefore, we
may build $\sigma^{(i+1)}$ from $\sigma^{(i)}$ such that the tasks
$\tau_1, \tau_2, \ldots, \tau_i$ are scheduled at the very same
instants and on the very same processors as there were in
$\sigma^{(i)}$. We apply now the induction step: for all $t \geq
\widehat{S}_{i}$ in $\sigma^{(i)}$ we have $a^{(i)}(t) = a^{(i)}(t +P_i)$ the
availability of the processors repeats. Notice that at the instants
$t$ and $t+P_i$ the available processors (if any) are the same. Hence
at only these instants task $\tau_{i+1}$ {\em may} be executed in the
time interval $[\widehat{S}_{i+1}, \widehat{S}_{i+1}+P_{i+1})$. 

The instants $t$ such that $\widehat{S}_{i+1} \leq t < \widehat{S}_{i+1}+P_{i+1}$, where
$\tau_{i+1}$ may be executed in $\sigma^{(i+1)}$, are periodic with
period $P_{i+1}$, since $P_{i+1}$ is a multiple of $P_i$ and $\widehat{S}_{i+1}
\geq \widehat{S}_i$. We prove now by contradiction that the system is in the
same state at time instant $\widehat{S}_{i+1}$ and $\widehat{S}_{i+1}+P_{i+1}$. We suppose that  $\theta(\widehat{S}_{i+1}) \neq \theta(\widehat{S}_{i+1}+P_{i+1})$.

We first prove that $ \nexists t \in [\widehat{S}_{i+1}, \widehat{S}_{i+1}+P_{i+1})$ such
that at $t$ there is at least one available processor in
$\sigma^{(i)}$ and no job of $\tau_{i+1}$ is scheduled at $t$ in
$\sigma^{(i+1)}$. If there is such an instant $t'$, then by
Corollary~\ref{Coreither} we have that $\theta(t'-P_{i+1})=
\theta(t')$ since from the inductive hypothesis (notice that
$P_{i+1}$ is multiple of $P_i$) and since $t'-P_{i+1} \geq \widehat{S}_{i+1}-P_{i+1}
\geq \widehat{S}_{i} \geq \cdots \geq \widehat{S}_1$ we obtain that
$\theta_k(t'-P_{i+1})=\theta_k(t')$ for $1 \leq k \leq
i$. Consequently, $\theta(\widehat{S}_{i+1})= \theta(\widehat{S}_{i+1}+P_{i+1})$ which is
in contradiction with our assumption.

Secondly, since $\theta_{i+1}(\widehat{S}_{i+1}) \neq
\theta_{i+1}(\widehat{S}_{i+1}+P_{i+1})$ then by Corollary~\ref{Coreither} we
have that either there are less active jobs at $\widehat{S}_{i+1}$ than at
$\widehat{S}_{i+1}+P_{i+1}$, or if there is the same number of active jobs of
$\widehat{S}_{i+1}$ then the oldest active job at $\widehat{S}_{i+1}$ was executed for
more time units than the oldest active at $\widehat{S}_{i+1}+P_{i+1}$. Therefore
since $\nexists t \in [\widehat{S}_{i+1}, \widehat{S}_{i+1}+P_{i+1})$ such that at $t$
there is at least one processor available in $\sigma^{(i)}$ and no job
of $\tau_{i+1}$ is scheduled at $t$ in $\sigma^{(i+1)}$, then we have
that there are no sufficient time instants when at least one processor
is available to schedule all the jobs actived of $\tau_{i+1}$ in the
time interval $[\widehat{S}_{i+1}, \widehat{S}_{i+1}+P_{i+1})$. We obtain that the system
is not feasible, which is in contradiction with our assumption of
$\tau$ being feasible.

Consequently $\theta(\widehat{S}_{i+1})=\theta(\widehat{S}_{i+1}+P_{i+1})$, moreover by
definition of $\widehat{S}_{i+1}$ (which corresponds to an activation of
$\tau_{i+1}$) the task activations repeat from $\widehat{S}_{i+1}$ which proves
the property.
\end{proof}

\section{Exact feasibility tests} \label{sectionExactFebTest}

In the previous sections, we assumed that the execution requirement of
each task is constant while the designer knows actually only an upper
bound on the actual execution requirement, i.e., the worst case
execution time (WCET). Consequently, we have to show that our tests are
\emph{robust}, i.e., considering the scenario where all task
requirements are the maximal ones is indeed the worst case scenario,
which is not obvious upon multiprocessors because of scheduling
anomalies. More precisely, we have to show that the considered
schedulers upon the considered platforms are \emph{predictable}. Based
on this property of predictability and the periodicity results of
Section~\ref{sectionMainPer}, we provide exact feasibility tests for
the various kind schedulers and platforms considered in this work.

First of all, we introduce and formalize the notion of \emph{feasibility
interval} necessary to provide the exact feasibility tests:

\begin{Definition}[Feasibility interval]
  For any task system $\tau = \{ \tau_1, \ldots, \tau_n \}$ and any
  set of $m$ processors $\{\pi_1, \ldots, \pi_m \}$, the {\em feasibility
    interval} is a finite interval such that if no deadline is missed
  while considering only requests within this interval then no
  deadline will ever be missed.
\end{Definition}

\subsection{Preliminary results} \label{sectPremRes}


In this section, we consider the scheduling of sets of job $J \equals J_{1}, J_{2}, J_{3}\ldots$, (finite or infinite set of jobs) and without loss of generality we consider jobs in decreasing order of priorities $(J_1 > J_2 > J_{3} > \cdots$). We suppose that the execution times of each job $J_i$ can be any value in the interval $[e_i^{-}, e_i^{+}]$ and we denote by $J^{+}_i$ the job defined from job $J_i$ as follows:
$J^{+}_i \equals (r_i,e_i^{+},d_i)$. The associated execution rates of
$J^{+}_i$ are $s_{i,j}^{+} \equals s_{i,j}, \forall j$.  Similarly,
$J^{-}_i$ is the job defined from $J_i$ as follows:
$J^{-}_i=(r_i,e_i^{-},d_i)$. Similarly, the associated execution rates
of $J^{-}_i$ are $s_{i,j}^{-} \equals s_{i,j}, \forall j$. We denote
by $J^{(i)}$ the set of the first $i$ higher priority jobs. We denote
also by $J^{(i)}_{-}$ the set $\{ J^{-}_1, \ldots, J^{-}_i \}$ and by
$J^{(i)}_{+}$ the set $\{J^{+}_1, \ldots, J^{+}_i \}$. Notice that the
schedule of an ordered set of jobs using a work-conserving and
job-level fixed-priority algorithm is unique. Let $S(J)$ be the time instant at
which the lowest priority job of $J$ begins its execution in the
schedule. Similarly, let $F(J)$ be the time instant at which the
lowest priority job of $J$ completes its execution in the schedule.

\begin{Definition}[Predictable algorithms]\label{predAlg}
  A scheduling algorithm is said to be {\em predictable} if $S(J^{(i)}_{-})
  \leq S(J^{(i)}) \leq S(J^{(i)}_{+})$ and $F(J^{(i)}_{-}) \leq
  F(J^{(i)}) \leq F(J^{(i)}_{+})$, for all $1 \leq i \leq \ell$ and for all
feasible $J^{(i)}_{+}$ sets of jobs.
\end{Definition}

In~\cite{Ha} the authors showed that work-conserving job-level fixed-priority
algorithms are predictable on \emph{identical} processors. We will now extend that result by considering \emph{unrelated} platforms.

But first, we will adapt the definition availability of processors (Definition~\ref{defAvai}) to deal with the scheduling of \emph{jobs}. 

\begin{Definition}[Availability of the processors $A(J,t)$, job scheduling]\label{defAvaiJob}
For any ordered set of jobs $J$ and any set of $m$ unrelated processors
$\{\pi_1, \ldots, \pi_m \}$, we define the
  {\em availability of the processors} $A(J,t)$ of the set of jobs $J$
  at instant $t$ as the set of available processors: $A(J,t) \equals
  \{j \mid \mbox{ } \sigma_j(t)=0 \} \subseteq \{1, \ldots, m \}$, where
  $\sigma$ is the schedule of $J$.
\end{Definition}

\begin{Lemma}\label{lemmaSoon} 
  For any feasible ordered set of jobs $J$ (using the
  job-level fixed-priority and work-conserving schedule) upon an
  arbitrary set of unrelated processors $\{\pi_1, \ldots, \pi_m\}$, we
  have that $A(J^{(i)}_{+},t) \subseteq A(J^{(i)},t)$, for all $t$ and
  all $i$. That is, at any time instant the processors available in
  ${\sigma^{(i)}_{+}}$ are also available in ${\sigma^{(i)}}$. (We
  consider that the sets of jobs are ordered in the same decreasing
  order of the priorities, i.e., $J_1 > J_2 > \cdots > J_{\ell}$ and
  $J_1^{+} > J_2^{+} > \cdots > J_{\ell}^{+}$.)
\end{Lemma}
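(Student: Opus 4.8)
The plan is to proceed by a double induction: an outer induction on the index $i$ (the number of jobs considered), and, for fixed $i$, an inner induction on the time instant $t$. The statement for $i$ presupposes the statement for $i-1$, because the jobs $J_1, \ldots, J_{i-1}$ are scheduled identically in $\sigma^{(i)}$ and in $\sigma^{(i-1)}$ (and likewise $J_1^{+}, \ldots, J_{i-1}^{+}$ in $\sigma^{(i)}_{+}$ and $\sigma^{(i-1)}_{+}$), since a job-level fixed-priority scheduler never lets a lower-priority job affect the placement of higher-priority ones. So the only genuine new content at step $i$ concerns where job $J_i$ (resp.\ $J_i^{+}$) is placed relative to the common higher-priority jobs, and how that placement consumes an available processor.

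For the inner induction, I would maintain the stronger invariant that at every instant $t$ we have both $A(J^{(i)}_{+},t) \subseteq A(J^{(i)},t)$ \emph{and} $\epsilon_i^{+}(t) \le \epsilon_i(t)$, where $\epsilon_i(t)$ (resp.\ $\epsilon_i^{+}(t)$) is the amount of work done on $J_i$ (resp.\ $J_i^{+}$) by time $t$ in $\sigma^{(i)}$ (resp.\ $\sigma^{(i)}_{+}$). At $t=0$ (or before $r_i$) both are trivial. For the step from $t$ to $t+1$: by the outer induction hypothesis the higher-priority jobs occupy a subset of processors in the ``$+$'' schedule that is contained in the set they occupy in the unprefixed schedule; combined with $A(J^{(i)}_{+},t)\subseteq A(J^{(i)},t)$ one checks that the fastest eligible processor left for $J_i^{+}$ is at least as fast (in the task-relative order) as the one left for $J_i$ — here the work-conserving rule of Definition~\ref{defWorkC} is what pins down the choice. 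Since $J_i^{+}$ is never ahead of $J_i$ in executed work and, when active, runs on an at-least-as-fast processor, $J_i$ finishes no later; and while $J_i$ has already finished but $J_i^{+}$ has not, the processor that $J_i$ would have used is free in $\sigma^{(i)}$, which is exactly what keeps $A(J^{(i)}_{+},t+1)\subseteq A(J^{(i)},t+1)$. One has to handle carefully the case where $J_i$ and $J_i^{+}$ occupy \emph{different} processors at some instant, using the fact that the remaining processors not taken by higher-priority jobs form the same totally ordered set in both schedules up to the extra freedom in the ``$+$'' schedule.

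The main obstacle I expect is the bookkeeping around that last case: because execution rates are task-relative and ties in the processor order are broken only ``consistently'', one must argue that the processor $J_i^{+}$ lands on is never \emph{less} available-consuming than the one $J_i$ lands on, even though the two schedules may diverge on which concrete processor each high-priority job sits on — and they need not, since those jobs are scheduled identically, which is the saving grace. A secondary subtlety is that feasibility of $J$ (hence of $J^{(i)}_{+}$, by Definition~\ref{predAlg}'s hypothesis) is needed only to ensure the schedules are well defined and the induction does not stall on a missed deadline; it is not otherwise used in this set-inclusion statement. Once the invariant is established for all $t$, the lemma is immediate, and it will feed directly into the predictability proof (bounding $S$ and $F$) in the next step.
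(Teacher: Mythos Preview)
Your outer-induction-on-$i$ plan is exactly the paper's, and your inner induction on $t$ with the executed-work invariant is a more explicit form of what the paper states informally (``$J_{i+1}$ can be scheduled either at the very same instants \ldots\ on the very same or faster processors, or may progress during additional time instants''). So the architecture is right.

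But you have the key comparison reversed, and as written the inductive step does not go through. From the outer hypothesis $A(J^{(i-1)}_{+},t)\subseteq A(J^{(i-1)},t)$, the processors \emph{occupied} by higher-priority jobs in the ``$+$'' schedule form a \emph{superset} (not a subset) of those occupied in the actual schedule; hence the fastest eligible processor available to $J_i$ --- picked from the larger set $A(J^{(i-1)},t)$ --- is at least as fast as the one available to $J_i^{+}$, not the other way around. That is the direction you need: it is $J_i$, not $J_i^{+}$, that progresses at least as quickly per unit time, which together with $e_i\le e_i^{+}$ gives $\epsilon_i(t)\ge \epsilon_i^{+}(t)$ while $J_i$ is active and hence $J_i$ finishes no later. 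As you wrote it, ``$J_i^{+}$ \ldots\ runs on an at-least-as-fast processor'' cannot yield ``$J_i$ finishes no later.'' A related slip: the higher-priority jobs $J_k$ and $J_k^{+}$ are \emph{not} scheduled identically across the two schedules (the $+$ versions may run longer), so ``those jobs are scheduled identically'' is not the saving grace; what carries the argument is precisely the set inclusion from the outer hypothesis. Once the direction is fixed, your worry about matching concrete processors disappears: you only ever compare the fastest eligible elements of two nested sets, which is the paper's two-case analysis (either the fastest eligible processor lies in $A(J^{(i-1)},t)\setminus A(J^{(i-1)}_{+},t)$, or it coincides with the one chosen by $J_i^{+}$).
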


\begin{proof}
  The proof is made by induction by $\ell$ (the number of jobs).  Our
  inductive hypothesis is the following: $A(J^{(k)}_{+},t) \subseteq
  A(J^{(k)},t)$, for all $t$ and $1 \leq k \leq i$.
 
  The property is true in the base case since $A(J^{(1)}_{+},t)
  \subseteq A(J^{(1)},t)$, for all $t$. Indeed, $S(J^{(1)}) =
  S(J^{(1)}_{+})$. Moreover $J_{1}$ and $J_{1}^{+}$ are both scheduled on
  their fastest (same) processor $\pi_{n_{1,1}}$, but $J_{1}^{+}$ will
  be executed for the same or a larger amount of time than $J_{1}$.

  We will show now that $A(J^{(i+1)}_{+},t) \subseteq
  A(J^{(i+1)},t)$, for all $t$.

  Since the jobs in $J^{(i)}$ have higher priority than $J_{i+1}$, then
  the scheduling of $J_{i+1}$ will not interfere with higher priority jobs
  which are already scheduled. Similarly, $J^{+}_{i+1}$ will not
  interfere with higher priority jobs of $J^{(i)}_{+}$ which are
  already scheduled. Therefore, we may build the schedule
  $\sigma^{(i+1)}$ from $\sigma^{(i)}$, such that the jobs $J_1, J_2,
  \ldots, J_{i}$, are scheduled at the very same instants and on the
  very same processors as they were in $\sigma^{(i)}$. Similarly, we
  may build $\sigma^{(i+1)}_{+}$ from $\sigma^{(i)}_{+}$.

  Notice that $A(J^{(i+1)},t)$ will contain the same available
  processors as $A(J^{(i)},t)$ for all $t$ except the time instants at
  which $J^{(i+1)}$ is scheduled, and similarly $A(J^{(i+1)}_{+},t)$
  will contain the same available processors as $A(J^{(i)}_{+},t)$ for
  all $t$ except the time instants at which $J^{(i+1)}_{+}$ is
  scheduled. From the inductive hypothesis we have that
  $A(J^{(i)}_{+},t) \subseteq A(J^{(i)},t)$, for all $t$, and
  consequently, at any time instant $t$ we have the following
  situations:

  \begin{itemize}
  \item there is at least one eligible processor in $A(J^{(i)},t)
    \backslash A(J^{(i)}_{+},t)$ and among them the fastest processor
    is faster than those belonging to $ A(J^{(i)}_{+},t))$. Consequently,
    $J_{i+1}$ can be scheduled at time instant $t$ on faster
    processors than $J_{i+1}^{+}$.
  \item there is no eligible processor in $A(J^{(i)},t) \backslash
    A(J^{(i)}_{+},t)$. Consequently, $J_{i+1}$ can be scheduled at time
    instant $t$ on the very same processor as $J_{i+1}^{+}$.
 \end{itemize}

 Therefore, $J_{i+1}$ can be scheduled either at the very same instants
 than $J_{i+1}^{+}$ on the very same or faster processors, or may
 progress during additional time instants. Combined with the fact that
 $e_{i}\leq e_{i}^+$ the property follows for both situations.
\end{proof}

\begin{Theorem} \label{thNotWorkPred} Job-level fixed-priority algorithms are predictable on unrelated platforms.
\end{Theorem}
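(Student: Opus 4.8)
The plan is to derive the theorem from Lemma~\ref{lemmaSoon} by tracking, for each index $i$, the cumulative amount of execution received by the lowest-priority job $J_i$. Fix $i$ and assume $J^{(i)}_{+}$ is feasible (as required by Definition~\ref{predAlg}). First I would note that feasibility of $J^{(i)}_{+}$ entails feasibility of both $J^{(i)}$ and $J^{(i)}_{-}$: shrinking execution requirements only enlarges the availability sets, so no job can be made to miss a deadline. This lets us invoke Lemma~\ref{lemmaSoon} for the relevant pairs of job sets.

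The core step is the comparison of $\sigma^{(i)}$ with $\sigma^{(i)}_{+}$. By Lemma~\ref{lemmaSoon}, $A(J^{(i)}_{+},t)\subseteq A(J^{(i)},t)$ for every $t$; moreover, the induction inside the proof of that lemma shows that at every instant at which $J_i$ is still unfinished, either $J_i$ runs in $\sigma^{(i)}$ on a processor at least as fast (relatively to $J_i$) as the one on which $J_i^{+}$ runs in $\sigma^{(i)}_{+}$, or $J_i^{+}$ does not run at all in $\sigma^{(i)}_{+}$. Writing $E(t)$ for the execution accumulated by $J_i$ in $[r_i,t)$ in $\sigma^{(i)}$ and $E^{+}(t)$ for that of $J_i^{+}$ in $\sigma^{(i)}_{+}$, this yields $E(t)\geq E^{+}(t)$ as long as $J_i$ has not completed. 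Since $J_i$ requires $e_i\leq e_i^{+}$ units and both jobs are released at the same instant $r_i$, the job $J_i$ both starts and completes no later in $\sigma^{(i)}$ than $J_i^{+}$ does in $\sigma^{(i)}_{+}$; that is, $S(J^{(i)})\leq S(J^{(i)}_{+})$ and $F(J^{(i)})\leq F(J^{(i)}_{+})$.

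For the remaining two inequalities I would apply Lemma~\ref{lemmaSoon} a second time, now to the pair $(J^{(i)}_{-},J^{(i)})$: since $e_k\geq e_k^{-}$ for every $k$, the set $J^{(i)}$ is itself an ``inflation'' of $J^{(i)}_{-}$, so the lemma gives $A(J^{(i)},t)\subseteq A(J^{(i)}_{-},t)$, and the argument of the previous paragraph, with the two sets exchanged, produces $S(J^{(i)}_{-})\leq S(J^{(i)})$ and $F(J^{(i)}_{-})\leq F(J^{(i)})$. Chaining these with the bounds above gives exactly $S(J^{(i)}_{-})\leq S(J^{(i)})\leq S(J^{(i)}_{+})$ and $F(J^{(i)}_{-})\leq F(J^{(i)})\leq F(J^{(i)}_{+})$ for every $1\leq i\leq\ell$, which is Definition~\ref{predAlg}.

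The step I expect to be the real work is the middle one: Lemma~\ref{lemmaSoon} as stated only asserts the inclusion of availability sets, whereas what is actually needed is the pointwise domination $E(t)\geq E^{+}(t)$ of the cumulative execution of the lowest-priority job. Extracting this on \emph{unrelated} platforms is delicate, because an extra available processor helps only in combination with the fact that it is at least as fast for $J_i$ as any processor available in $\sigma^{(i)}_{+}$ — precisely the case distinction already performed inside the proof of Lemma~\ref{lemmaSoon}. The cleanest write-up is therefore probably to fold this bookkeeping into the induction rather than to cite the lemma as a black box. One should also confirm the routine boundary facts — that work-conservation forces $J_i$ to execute in $\sigma^{(i)}$ whenever it is unfinished and an eligible processor is free, and that releases coincide — but these are immediate from the model, and the identical-platform case of this result is already known~\cite{Ha}.
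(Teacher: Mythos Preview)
Your proposal is correct and follows essentially the same line as the paper: both arguments rest on Lemma~\ref{lemmaSoon}'s availability inclusion $A(J^{(i-1)}_{+},t)\subseteq A(J^{(i-1)},t)$, then observe that work-conservation places $J_i$ on a processor at least as fast (relative to $J_i$) as the one given to $J_i^{+}$, and combine this with $e_i\leq e_i^{+}$ to bound start and finish times. The paper writes this as an explicit induction on the job index inside the theorem (and appeals to the inductive hypothesis on finish times to get $S(J^{(i+1)})\leq S(J^{(i+1)}_{+})$), which is exactly the ``fold the bookkeeping into the induction'' option you anticipate; your cumulative-execution formulation $E(t)\geq E^{+}(t)$ is an equivalent repackaging of the same step.
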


\begin{proof}
  For a feasible ordered set $J$ of $\ell$ jobs and a set of unrelated
  processors $\{\pi_1, \ldots, \pi_m\}$, we have to show that
  $S(J^{(i)}_{-}) \leq S(J^{(i)}) \leq S(J^{(i)}_{+})$ and
  $F(J^{(i)}_{-}) \leq F(J^{(i)}) \leq F(J^{(i)}_{+})$, for all $1
  \leq i \leq \ell$. (The sets of jobs are ordered in the same
  decreasing order of the priorities, i.e., $J_1^{-} > J_2^{-} >
  \cdots > J_{\ell}^{-}$, $J_1 > J_2 > \cdots > J_{\ell}$ and $J_1^{+}
  > J_2^{+} > \cdots > J_{\ell}^{+}$.)

  The proof is made by induction by $\ell$ (the number of jobs). We
  show the second part of each inequality, i.e. $ S(J^{(i)}) \leq
  S(J^{(i)}_{+})$ and $ F(J^{(i)}) \leq F(J^{(i)}_{+})$, for all $1
  \leq i \leq \ell$. The proof of the first part of the inequality is
  similar.

  Our inductive hypothesis is the following: $ S(J^{(k)}) \leq
  S(J^{(k)}_{+})$ and $ F(J^{(k)}) \leq F(J^{(k)}_{+})$, for all $1
  \leq k \leq i$.

  The property is true in the base case since $S(J^{(1)}) =
  S(J^{(1)}_{+})$ and $F(J^{(1)}) \leq F(J^{(1)}_{+})$.

  We will show now that $ S(J^{(i+1)}) \leq S(J^{(i+1)}_{+})$ and
  $F(J^{(i+1)}) \leq F(J^{(i+1)}_{+})$.

  Since the jobs in $J^{(i)}$ have higher priority than $J_{i+1}$ then
  the scheduling of $J_{i+1}$ will not interfere with higher priority
  jobs which are already scheduled. Similarly, $J^{+}_{i+1}$ will not
  interfere with higher priority jobs of $J^{(i)}_{+}$ which are
  already scheduled.  Therefore, we may build the schedule
  $\sigma^{(i+1)}$ from $\sigma^{(i)}$, such that the jobs $J_1, J_2,
  \ldots, J_{i}$, are scheduled at the very same instants and on the
  very same processors as they were in $\sigma^{(i)}$. Similarly, we
  may build $\sigma^{(i+1)}_{+}$ from $\sigma^{(i)}_{+}$.  The job
  $J_{i+1}$ can be scheduled only when processors, for which the
  associated execution rates are not equal to zero, are available in
  $\sigma^{(i)}$ and at those time instants $t_0 \geq r_{i+1}$ for
  which $A(J^{(i)},t)$ contains at least one eligible
  processor. Similarly, $J^{+}_{i+1}$ may be scheduled at those time
  instants $t_0^{+} \geq r_{i+1}$ for which $A(J^{(i)}_{+},t) $
  contains at least one eligible processor. By the inductive
  hypothesis we know that higher priority jobs complete sooner (or at
  the same time) consequently $t_0 \leq t_0^{+}$ and $J_{i+1}$ begins
  its execution in $\sigma^{(i+1)}$ sooner or at the same instant than
  $J_{i+1}^{+}$ in $\sigma^{(i+1)}_{+}$, i.e. $ S(J^{(i+1)}) \leq
  S(J^{(i+1)}_{+})$. It follows by Lemma~\ref{lemmaSoon} that from
  time $t_{0}$ the job $J_{i+1}$ can be scheduled at least at the very
  same instants and on the very same processors than $J_{i+1}^{+}$,
  but the job $J_{i+1}$ may also progress at the very same instants on
  faster processors (relatively to its associated set of processors)
  or during additional time instants (since we consider
  work-conserving scheduling). Consequently, $F(J^{(i+1)}) \leq
  F(J^{(i+1)}_{+})$.
\end{proof}

\subsection{Asynchronous constrained deadline systems and task-level fixed-priority schedulers}

Now we have the material to define an exact feasibility test for
asynchronous constrained deadline periodic systems.


\begin{Corollary}\label{help} 
For any preemptive task-level fixed-priority algorithm $A$ and for any asynchronous constrained deadline system $\tau$ on $m$ unrelated processors, we have that $\tau$ is $A$-feasible if and only if all deadlines are met in $[0, S_n+P)$ and if $\theta(S_{n})=\theta(S_{n}+P)$, where $S_i$ are defined inductively in Theorem~\ref{asynPer}. Moreover, for every task $\tau_i$ one only has to check the deadlines in the interval $[S_i, S_i+ \lcm \{T_j \mid j \leq i\})$.
\end{Corollary}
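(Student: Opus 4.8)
The plan is to combine the periodicity result of Theorem~\ref{asynPer} with the predictability result of Theorem~\ref{thNotWorkPred} (specialized to task-level fixed-priority schedulers, which are a fortiori job-level fixed-priority and work-conserving). First I would establish the ``only if'' direction, which is immediate: if $\tau$ is $A$-feasible (in the sense of the constant-WCET model), then in particular no deadline is missed in the finite window $[0, S_n+P)$; and by Theorem~\ref{asynPer} the $A$-feasible schedule is periodic with period $P$ from $S_n$, so $\theta(S_n)=\theta(S_n+P)$ follows directly from the definition of periodicity of the state (using that the scheduler is memoryless/deterministic, so equal schedules on $[S_n,\infty)$ force equal states).

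Next I would handle the ``if'' direction, which is the heart of the corollary. Assume all deadlines are met in $[0,S_n+P)$ and $\theta(S_n)=\theta(S_n+P)$. Because $A$ is deterministic and memoryless and the job-arrival pattern repeats with period $P$ from $O_{\max}\le S_n$ (here one invokes that $S_n$ is, by its inductive definition, an instant that coincides with an activation of every task $\tau_i$ modulo $T_i$, so the arrival pattern from $S_n$ is identical to that from $S_n+P$), the equality $\theta(S_n)=\theta(S_n+P)$ propagates forward: $\theta(S_n+kP)=\theta(S_n)$ and $\sigma(S_n+kP+t)=\sigma(S_n+t)$ for all $t\ge 0$ and all $k\ge 0$. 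Hence the behaviour on $[S_n+P,\infty)$ is a verbatim copy of the behaviour on $[S_n, S_n+P)$, so no deadline is ever missed, i.e. $\tau$ is $A$-feasible under the constant-WCET assumption. The remaining ingredient — and where predictability enters — is to argue that verifying feasibility under the constant-WCET assumption is sufficient for the true (variable execution requirement) system: by Theorem~\ref{thNotWorkPred}, $A$ is predictable on unrelated platforms, so each job finishes no later in the maximal-WCET scenario than in any admissible scenario; therefore if all deadlines are met when every task runs for its WCET $C_i$, they are met for every choice of actual execution times $e_i\in[e_i^-,C_i]$. This is the step I expect to require the most care: one must phrase predictability job-by-job over the (infinite) sequence of jobs generated by $\tau$, observe that the set of jobs released in $[0,S_n+P)$ together with periodicity generates the whole job stream, and conclude the deadline-meeting property transfers from $J^{(i)}_+$ to $J^{(i)}$ for every prefix.

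Finally, for the refinement ``for every task $\tau_i$ one only has to check the deadlines in $[S_i, S_i+\lcm\{T_j\mid j\le i\})$'': I would appeal to the inductive structure already present in the proof of Theorem~\ref{asynPer}. There it is shown that $\sigma^{(i)}$ is periodic from $S_i$ with period $P_i=\lcm\{T_1,\dots,T_i\}$, and that $\tau_i$ does not interfere with the lower-priority tasks $\tau_{i+1},\dots,\tau_n$; hence whether $\tau_i$ meets all its deadlines depends only on the schedule $\sigma^{(i)}$, which repeats with period $P_i$ from $S_i$. So it suffices to check $\tau_i$'s deadlines over one period $[S_i, S_i+P_i)$ of $\sigma^{(i)}$; any deadline miss of $\tau_i$ would recur in this window, and conversely if none is missed there, none is ever missed. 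Assembling the per-task checks over $i=1,\dots,n$ and noting $S_i\le S_n$, $P_i\mid P$, yields the claimed interval $[0,S_n+P)$ as the coarser sufficient window, completing the proof. The one subtlety to flag is that this per-task localization is valid precisely because priorities are task-level fixed (so higher-priority tasks are oblivious to lower-priority ones); it would fail for general job-level fixed-priority schedulers, which is why the corollary is stated only for task-level fixed-priority $A$.
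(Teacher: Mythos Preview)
Your proposal is correct and follows the same approach as the paper, which simply cites Theorem~\ref{asynPer} (periodicity) and Theorem~\ref{thNotWorkPred} (predictability) as making the corollary a direct consequence; you have usefully unpacked what the paper leaves implicit, in particular the ``if'' direction (deducing global feasibility from $\theta(S_n)=\theta(S_n+P)$ via determinism/memorylessness and $P$-periodicity of arrivals) and the per-task localization via the inductive structure of Theorem~\ref{asynPer}. Two small slips worth fixing: $S_n$ is \emph{not} in general an activation instant of every $\tau_i$ (only of $\tau_n$)---your argument actually only needs $S_n\ge O_{\max}$ so that the arrival pattern is $P$-periodic from $S_n$; and in the final paragraph the non-interference runs the other way---it is the \emph{lower}-priority tasks $\tau_{i+1},\dots,\tau_n$ that do not interfere with $\tau_i$, which is precisely why $\tau_i$'s deadlines depend only on $\sigma^{(i)}$.
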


\begin{proof}
The Corollary~\ref{help} is a direct consequence of Theorem~\ref{asynPer} and Theorem~\ref{thNotWorkPred}, since task-level fixed-priority algorithms are job-level fixed-priority schedulers.
\end{proof}

The feasibility test given by Corollary~\ref{help} may be improved
as it was done in the uniprocessor case~\cite{Goossens2}, actually the prove remains for multiprocessor platforms since it  does not depend on the number of processors, nor on the kind of platforms but on the availability of the processors.

\begin{Theorem}[\cite{Goossens2}]\label{thLowBoundInt} Let $X_i$ be
  inductively defined by $X_n=S_n, X_i=O_i + \lfloor
  \frac{X_{i+1}-O_i}{T_i} \rfloor T_i$ $(i \in \{n-1, n-2, \ldots, 1
  \}$; we have that $\tau$ is $A$-feasible if and only if all deadlines are met in $[X_{1}, S_n+P)$ and if $\theta(S_{n})=\theta(S_{n}+P)$.
  \end{Theorem}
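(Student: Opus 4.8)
The plan is to obtain $[X_1,S_n+P)$ as a feasibility interval by \emph{shrinking} the interval $[0,S_n+P)$ of Corollary~\ref{help}: concretely, to prove that the jobs of $\tau$ released strictly before $X_1$ may be ignored, after which the statement follows from Corollary~\ref{help}.

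First I would record the order and divisibility properties of the sequence $(X_i)$. A downward induction on $i$ shows that each $X_i$ is an activation instant of $\tau_i$ (it has the form $O_i+qT_i$ with $q\in\mathbb{N}$), that $S_i\le X_i\le X_{i+1}$, and that $P_i$ divides $P_{i+1}$; hence $O_1=S_1\le X_1\le X_2\le\cdots\le X_n=S_n$ and $P_i\mid P$. Two consequences are used repeatedly: (i) for each $i$ the window $[X_i,X_i+P_i)$ is contained in $[X_1,S_n+P)$, since $X_1\le X_i$ and $X_i+P_i\le X_n+P=S_n+P$; and (ii) by Theorem~\ref{asynPer} applied to the subsystem $\tau^{(i-1)}$, the availability $a^{(i-1)}$ of the processors left free by the $i-1$ highest-priority tasks is $P_{i-1}$-periodic from $S_{i-1}$, hence, since $S_{i-1}\le X_i$ and $P_{i-1}\mid P_i$, it is $P_i$-periodic from $X_i$.

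Next comes the core reduction, by induction on the task index $i$: assuming $\tau^{(i-1)}$ misses no deadline, I would show $\tau_i$ misses no deadline, given that all jobs released in $[X_1,S_n+P)$ meet their deadlines and $\theta(S_n)=\theta(S_n+P)$. Build $\sigma^{(i)}$ from $\sigma^{(i-1)}$ as in the proof of Theorem~\ref{asynPer}: $\tau_i$ runs only on the processors freed by $a^{(i-1)}$, whose pattern repeats with period $P_i$ from $X_i$. Because deadlines are constrained and $X_i$ is an activation of $\tau_i$, a job of $\tau_i$ released at or after $X_i$ that meets its deadline leaves no backlog for the next job; so, by the $P_i$-periodicity of $a^{(i-1)}$ from $X_i$ and the fact that $T_i\mid P_i$, the behaviour of $\tau_i$ over $[X_i,X_i+P_i)$ is replicated over every $[X_i+kP_i,X_i+(k+1)P_i)$. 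As $[X_i,X_i+P_i)\subseteq[X_1,S_n+P)$, all of these jobs belong to those assumed to meet their deadlines; hence every job of $\tau_i$ released at or after $X_i$ meets its deadline, and the jobs of $\tau_i$ released in $[X_1,X_i)$ (if any) do so by hypothesis.

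The only jobs then left are those of $\tau_i$ released in $[O_i,X_1)$ when $O_i<X_1$, and this step I expect to be the main obstacle; it is precisely where the uniprocessor argument of~\cite{Goossens2} is reused. The key is a \emph{backward} monotonicity of availability: before the periodic regime, $\tau^{(i-1)}$ carries no more backlog than inside it, so the set of processors it leaves free at an early instant \emph{contains} the set it leaves free at the congruent instant of $[X_i,X_i+P_i)$ --- this is the inclusion of Lemma~\ref{lemmaSoon} and Corollary~\ref{Coreither} read towards the past. Consequently a job of $\tau_i$ released before $X_1$ faces no more interference, hence completes no later relative to its release, than the already-checked job of $\tau_i$ released at the matching instant of $[X_i,X_i+P_i)$, and therefore meets its deadline as well. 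Since this reasoning involves only the availability sets $a^{(i-1)}$ and never the number of processors nor the rates $s_{i,j}$, it transfers verbatim from~\cite{Goossens2} to unrelated platforms. Combining the conclusions for $i=1,\dots,n$, no deadline is ever missed, so by Corollary~\ref{help} the system is $A$-feasible. Conversely, if $\tau$ is $A$-feasible then a fortiori no deadline is missed among the jobs released in the sub-interval $[X_1,S_n+P)$, and $\theta(S_n)=\theta(S_n+P)$ by Theorem~\ref{asynPer}; this gives the equivalence.
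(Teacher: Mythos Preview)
Your proposal is correct and follows essentially the same route the paper intends: the paper does not give an independent proof here but simply asserts that the uniprocessor argument of~\cite{Goossens2} carries over verbatim because it relies only on the availability pattern $a^{(i-1)}$ and not on the number or nature of the processors --- precisely the reduction you spell out by showing $S_i\le X_i\le X_{i+1}$, using the $P_i$-periodicity of $a^{(i-1)}$ from $X_i$, and handling jobs released before $X_1$ via backward monotonicity. One small remark: the set-inclusion you need for the early jobs is more directly a consequence of Lemma~\ref{prepTh}/Corollary~\ref{prepThbis} applied to $\tau^{(i-1)}$ (iterated over multiples of $P_{i-1}$) than of Lemma~\ref{lemmaSoon}, which concerns varying execution requirements rather than time-shifted instances.
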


\subsection{Asynchronous arbitrary deadline systems and task-level fixed-priority policies}

Now we have the material to define an exact feasibility test for
asynchronous arbitrary deadline periodic systems.

\begin{Corollary}\label{helpbis} For any preemptive task-level fixed-priority algorithm $A$ and for any asynchronous arbitrary deadline system $\tau$ on $m$ unrelated
processors, we have that $\tau$ is $A$-feasible if and only if all deadlines are met in $[0, \widehat{S}_n+P)$ and if $\theta(\widehat{S}_{n})=\theta(\widehat{S}_{n}+P)$, where $\widehat{S}_i$ are defined inductively in Theorem~\ref{asynPerbis}.
\end{Corollary}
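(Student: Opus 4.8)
The plan is to obtain Corollary~\ref{helpbis} by combining the periodicity result Theorem~\ref{asynPerbis} with the predictability result Theorem~\ref{thNotWorkPred}, exactly as Corollary~\ref{help} was obtained in the constrained-deadline case. Since every preemptive task-level fixed-priority algorithm is in particular a preemptive job-level fixed-priority algorithm, Theorem~\ref{thNotWorkPred} applies: $A$ is predictable on the unrelated platform, so robust feasibility of $\tau$ (each job of $\tau_i$ running for an arbitrary value in $[e_i^{-},e_i^{+}]$) is equivalent to feasibility of the single scenario in which every job of $\tau_i$ runs for exactly $C_i=e_i^{+}$. Concretely, view $\tau$ as its priority-ordered job stream $J$ (tasks by priority, FIFO within a task); the worst-case scenario is the run of the ``$+$'' instance in the terminology of Section~\ref{sectPremRes}, and $F(J^{(i)})\le F(J^{(i)}_{+})$ on every finite prefix shows that if no job misses a deadline in the worst-case run then none does in any admissible run. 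Hence it suffices to prove the characterization for the constant-$C_i$ system, where the results of Section~\ref{sectionMainPer} are available.

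For the ``only if'' direction this is immediate: $A$-feasibility trivially implies no deadline is missed among the requests in $[0,\widehat{S}_n+P)$, and Theorem~\ref{asynPerbis} gives that the schedule is $P$-periodic from $\widehat{S}_n$, whence $\theta(\widehat{S}_n)=\theta(\widehat{S}_n+P)$. For the ``if'' direction, assume no request released in $[0,\widehat{S}_n+P)$ misses its deadline and $\theta(\widehat{S}_n)=\theta(\widehat{S}_n+P)$. I would first note that $\widehat{S}_n\ge O_{\max}$: each $\widehat{S}_i\ge O_i$ by construction, and the sequence is increasing because $\max\{O_i,\,O_i+\lceil\frac{\widehat{S}_{i-1}-O_i}{T_i}\rceil T_i\}\ge\widehat{S}_{i-1}$ and one adds $P_i\ge1$. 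Therefore the arrival pattern of all $n$ tasks is $P$-periodic from $\widehat{S}_n$; combined with $\theta(\widehat{S}_n)=\theta(\widehat{S}_n+P)$ and the fact that $A$ is deterministic and memoryless, an induction on $t$ gives $\theta(t)=\theta(t+P)$ and $\sigma(t)=\sigma(t+P)$ for every $t\ge\widehat{S}_n$. Consequently the treatment of every request released in $[\widehat{S}_n,\widehat{S}_n+P)$ recurs, $P$-shifted, for every later request, so (together with the hypothesis on $[0,\widehat{S}_n+P)$) no request ever misses its deadline: the constant-$C_i$ system is $A$-feasible, and by the reduction above so is $\tau$.

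I expect the only genuinely delicate point to be this last periodicity step. One cannot simply invoke Theorem~\ref{asynPerbis} here, since that theorem assumes $A$-feasibility a priori whereas feasibility is exactly what we are trying to conclude; the $P$-periodicity must be re-derived directly from the state coincidence $\theta(\widehat{S}_n)=\theta(\widehat{S}_n+P)$ and from $\widehat{S}_n$ lying past the last offset, using only determinism and memorylessness. The rest is routine bookkeeping: arbitrary deadlines allow several simultaneously active jobs of a task at $\widehat{S}_n$, but that multiplicity is recorded in the $\alpha$-component of $\theta$, so equality of states is precisely the right hypothesis, and the FIFO rule within each task makes the evolution from $\widehat{S}_n$ deterministic.
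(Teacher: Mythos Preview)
Your proof is correct and follows the same approach as the paper, which simply states that the corollary is a direct consequence of Theorem~\ref{asynPerbis} and Theorem~\ref{thNotWorkPred} since task-level fixed-priority algorithms are job-level fixed-priority. You have in fact been more careful than the paper: you correctly point out that in the ``if'' direction one cannot invoke Theorem~\ref{asynPerbis} (which presupposes $A$-feasibility) but must instead derive the $P$-periodicity directly from $\theta(\widehat{S}_n)=\theta(\widehat{S}_n+P)$, the periodicity of arrivals past $O_{\max}\le\widehat{S}_n$, and the deterministic memoryless nature of $A$; this is a genuine gap in the paper's one-line proof that your argument fills.
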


\begin{proof}
  The Corollary~\ref{helpbis} is a direct consequence of
  Theorem~\ref{asynPerbis} and Theorem~\ref{thNotWorkPred}, since task-level   fixed-priority algorithms are job-level fixed-priority schedulers.
\end{proof}

Notice that the length of our (feasibility) interval is
proportional to $P$ (the least common multiple of the
periods) which is unfortunately also the case of most
feasibility intervals for the \emph{simpler}
\emph{uni}processor scheduling problem (and for identical platforms or
simpler task models). In practice, the periods are usually
\emph{harmonics} which limits fairly the term $P$.

\subsection{{$\EDF$} scheduling of asynchronous arbitrary
  deadline systems}\label{edf}

We know by Corollary~\ref{edfAll1} that any deterministic, request-dependent and feasible
\EDF{} schedule is periodic with a period equal to $P$. Unfortunately,
from the best of our knowledge we have no upper bound on the time
instant at which the periodic part of the schedule begins. Examples
show that $O_{\max}+P$ is not such time instant for \EDF{} upon
multiprocessors (see~\cite{Braun2007Negative-Result} for instance).
Other examples, show that in some cases the periodic part of the
schedule begins after a very huge time interval (i.e., many
hyper-periods).

Based on Corollary~\ref{edfAll1} we will however define an \emph{exact} feasibility test under \EDF{} upon multiprocessors. The idea illustrated by Algorithm~\ref{algoedf} is to build the schedule (by means of simulation) and regularly check if the periodic part of the schedule is reached or not.

\begin{algorithm}
\SetKw{kwschedule}{Schedule}
\KwIn{task set $\tau$}
\KwOut{feasible}
\Begin{
\kwschedule\ (from 0) to $O_{\max}$ \;
\{The function \kwschedule stops the program and return false once a deadline is missed\}\\
$s_{1}$ := $\theta(O_{\max})$ \;
\kwschedule\ (from $O_{\max}$) to $O_{\max} + P$ \;
$s_{2}$ := $\theta(O_{\max} + P)$ \;
current-time := $O_{\max} + P$ \;
\While{$s_{1} \neq s_{2}$}{
	$s_{1}$ := $s_{2}$ \;
	\kwschedule\ (from current-time) to current-time + $P$ \;
	current-time := current-time + $P$ \;
	$s_{2}$ := $\theta$(current-time) \;
}
\Return{true}\;
}
\caption{Exact EDF-feasibility test upon multiprocessors\label{algoedf}}
\end{algorithm}

\subsection{The particular case of synchronous periodic task
  systems} \label{labelSectFebSynch}

In this section we present exact feasibility tests in the particular
case of synchronous periodic task systems. In Section
\ref{labelSectFebSynch1}, we study synchronous constrained deadline
task systems and in Section \ref{labelSectFebSynch2} synchronous
arbitrary deadline task systems.

\subsubsection{Synchronous constrained deadline task
  systems} \label{labelSectFebSynch1}

An exact feasibility test for synchronous constrained deadline systems
scheduled could be obtained directly by Theorem~\ref{thNotWorkPred}.

\begin{Corollary}\label{fixIdent2} For any deterministic, memoryless,
  job-level fixed-priority algorithm $A$ and any synchronous constrained deadline system $\tau$ on $m$ unrelated processors, we have that $\tau$ is $A$-feasible if and only if all
  deadlines are met in the interval $[0, P)$.
\end{Corollary}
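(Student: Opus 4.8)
The plan is to derive this corollary directly from the periodicity result for synchronous constrained deadline systems (Corollary~\ref{edfAll2}, or more generally Theorem~\ref{synPer}) combined with the predictability result of Theorem~\ref{thNotWorkPred}. The key observation is that for a synchronous constrained deadline system scheduled by a deterministic, memoryless, job-level fixed-priority algorithm, we already know from Theorem~\ref{synPer} that \emph{if} the schedule (with constant execution requirements) is feasible, then it is periodic with period $P$ starting at instant $0$. Hence in the idealized ``all-\(C_i\)'' scenario, checking that no deadline is missed in $[0,P)$ is \emph{necessary and sufficient} for feasibility of that idealized scenario: necessity is trivial (a missed deadline in $[0,P)$ witnesses infeasibility), and sufficiency follows because the schedule simply repeats verbatim after $P$, so no deadline is ever missed.

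First I would make precise the reduction from the periodic task system to the set-of-jobs setting of Section~\ref{sectPremRes}: the jobs released in $[0,P)$ by a synchronous constrained deadline system, ordered by the job-level fixed-priority rule, form an ordered set of jobs, and the work-conserving schedule of this ordered set coincides with $\sigma$ restricted to $[0,P)$ (and, by periodicity, determines $\sigma$ everywhere). Then I would invoke Theorem~\ref{thNotWorkPred}: since the actual execution requirements lie in $[e_i^-, e_i^+]$ with $e_i^+ = C_i$, predictability gives that each job completes no later in the actual schedule than in the $J^{(i)}_+$ schedule built from the worst-case requirements. Therefore, if every deadline is met in the worst-case (constant $C_i$) schedule — which by the periodicity argument above amounts to checking $[0,P)$ — every deadline is met in every admissible scenario; conversely, if some deadline is missed in $[0,P)$ in the worst-case scenario, the system is not $A$-feasible by definition. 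This establishes the ``if and only if''.

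The main obstacle, and the step requiring the most care, is ensuring that predictability (stated in Theorem~\ref{thNotWorkPred} for a single fixed ordered set of jobs) transfers correctly to the periodic setting where the set of jobs is infinite and where the \emph{relative} ordering of jobs by priority must be consistent across the worst-case and actual scenarios. For a job-level fixed-priority scheduler the priority order among jobs does not depend on execution requirements (only on release times and — for \EDF{} — deadlines, which are unchanged), so the same ordering $J_1 > J_2 > \cdots$ applies to $J^{(i)}$, $J^{(i)}_+$ and $J^{(i)}_-$; this is exactly the hypothesis under which Theorem~\ref{thNotWorkPred} was proved, and one only needs to remark that the argument applies uniformly to the (infinite) ordered job set, job by job, since each job's start and finish depend on finitely many higher-priority jobs. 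A secondary subtlety is that ``memoryless'' is needed to justify that the schedule in the actual scenario over $[0,P)$ already carries all the information — but since we reduce to the static job-set predictability theorem, this is automatic.

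I would close with the one-line remark that this corollary is the synchronous specialization of the asynchronous constrained deadline test (Corollary~\ref{help}), obtained by taking $O_i = 0$ for all $i$, whence $S_n = 0$ and $\theta(S_n) = \theta(0) = \theta(P) = \theta(S_n + P)$ automatically by Theorem~\ref{synPer}, so the interval $[0, S_n + P)$ collapses to $[0, P)$ and the state-equality condition becomes vacuous.
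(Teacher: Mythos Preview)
Your proposal is correct and follows essentially the same approach as the paper: the paper's proof is the single line ``The result is a direct consequence of Theorem~\ref{synPer} and Theorem~\ref{thNotWorkPred},'' which is exactly the combination of periodicity and predictability you invoke. Your additional discussion of the infinite job set, the priority ordering being execution-independent, and the specialization from Corollary~\ref{help} with $S_n=0$ are valid elaborations, but they do not depart from the paper's route.
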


\begin{proof}
  The result is a direct consequence of Theorem~\ref{synPer} and
  Theorem~\ref{thNotWorkPred}.
\end{proof}


\subsubsection{Synchronous arbitrary deadline task
  systems} \label{labelSectFebSynch2}

\begin{Corollary}\label{fixIdent2bis} For any preemptive task-level fixed-priority
  algorithm $A$ and any synchronous arbitrary deadline system $\tau$,
  $\tau$ is $A$-feasible on $m$ unrelated processors if and only if: all deadlines are met in the interval $[0, P)$, and $\theta(0) = \theta(P)$. 
\end{Corollary}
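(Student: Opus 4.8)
The plan is to obtain the statement by combining three facts already in hand: the predictability of job-level fixed-priority schedulers on unrelated platforms (Theorem~\ref{thNotWorkPred}), which reduces the variable-execution-requirement setting to the single worst-case scenario; the periodicity-from-the-origin of feasible synchronous arbitrary-deadline schedules (Corollary~\ref{synPerArbi}, itself resting on Lemma~\ref{synPerLem}); and the converse Theorem~\ref{synPerNotFeb}, which says that meeting all deadlines on $[0,P)$ while $\theta(0)\neq\theta(P)$ already certifies infeasibility.

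First I would make the reduction to worst-case execution times explicit. Let $\tau^{+}$ be the system in which every job of $\tau_{i}$ runs for exactly $C_{i}$. Since task-level fixed-priority algorithms are job-level fixed-priority and work-conserving, Theorem~\ref{thNotWorkPred} applies to the ordered set of jobs produced over any finite horizon: for every prefix $J^{(i)}$ one has $F(J^{(i)}) \le F(J^{(i)}_{+})$, hence any deadline met by the $\tau^{+}$-schedule is met in every schedule obtained with smaller execution requirements. Therefore $\tau$ is $A$-feasible if and only if $\tau^{+}$ is, and throughout the rest of the argument I may take the execution requirements constant and equal to $C_{i}$, with $\theta$ denoting the state of this schedule.

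For the ``if'' direction, assume all deadlines are met in $[0,P)$ and $\theta(0)=\theta(P)$. By Lemma~\ref{synPerLem} the $A$-schedule is periodic with period $P$ from instant $0$, and since $\tau^{+}$ is synchronous the arrival pattern --- hence the set of deadlines --- also repeats with period $P$ from $0$. Moreover $\theta(0)=\theta(P)$ together with synchrony means that at instant $P$ each task has exactly one active job, the one just released, so every job released in $[0,P)$ has completed by $P$ and, a fortiori, met its deadline; by periodicity the same holds on every interval $[kP,(k+1)P)$, so $\tau^{+}$ (and thus $\tau$) is $A$-feasible. For the ``only if'' direction, assume $\tau^{+}$ is $A$-feasible. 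Then no deadline is ever missed, in particular none in $[0,P)$; and if we had $\theta(0)\neq\theta(P)$, Theorem~\ref{synPerNotFeb} would force $\tau^{+}$ to be infeasible, a contradiction, so $\theta(0)=\theta(P)$.

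I do not anticipate a genuine obstacle: the two substantive arguments --- the periodicity analysis for arbitrary deadlines and the anomaly-free predictability proof on unrelated processors --- are already discharged by the cited results, and what remains is to chain them together. The only place that needs a little care is making sure Lemma~\ref{synPerLem} and Theorem~\ref{synPerNotFeb} are invoked on the constant-requirement system $\tau^{+}$, for which the hypothesis ``no deadline missed in $[0,P)$'' is exactly what has been assumed or derived, and that ``all deadlines met in $[0,P)$'' is read as ``no job released before $P$ misses its deadline'', which is the sense in which the local check is both necessary and sufficient.
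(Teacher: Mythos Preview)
Your proposal is correct and follows essentially the same approach as the paper, which simply states that the result is a direct consequence of Corollary~\ref{synPerArbi} and Theorem~\ref{thNotWorkPred}. You are more explicit---you unpack the biconditional by invoking Lemma~\ref{synPerLem} for the ``if'' direction and Theorem~\ref{synPerNotFeb} for the ``only if'' direction (the two ingredients from which Corollary~\ref{synPerArbi} is itself derived), and you take care to note that $\theta(0)=\theta(P)$ forces completion by time $P$ of all jobs released in $[0,P)$, which handles the arbitrary-deadline case cleanly---but the underlying argument is the same.
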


\begin{proof}
  The result is a direct consequence of Corollary~\ref{synPerArbi} and
  Theorem~\ref{thNotWorkPred}, since task-level fixed-priority schedulers are
  priority-driven.
\end{proof}

\section{Conclusion} \label{conclusion}
In this paper we studied the global scheduling of periodic task systems upon heterogeneous multiprocessor platforms. We provided exact feasibility tests based on periodicity properties. 

For any asynchronous arbitrary deadline periodic task system and any task-level fixed-priority scheduler (e.g., \RM) we characterized an upper bound in the schedule where the periodic part begins. Based on that property we provide   feasibility intervals (and consequently an exact feasibility tests) for those schedulers.

From the best of our knowledge such an interval is unknown for \EDF, a \emph{job}-level fixed-priority scheduler. Fortunately, based on a periodicity property we provide an algorithm which determine (by simulation means) where the periodicity is already started (if feasible), this algorithm provides an \emph{exact} feasibility test for \EDF{} upon heterogeneous multiprocessors.

\bibliographystyle{acm}
\bibliography{biblio.bib}

\end{document}